
\documentclass[12pt]{article}
\usepackage[textwidth=500pt,textheight=650pt,centering]{geometry}
\usepackage{amsmath,amssymb,amsthm}
\usepackage{hyperref}
\usepackage{graphicx}
\usepackage[usenames]{color}

\title{Fluctuations in a kinetic transport model for quantum friction}

\date{March 14, 2014}

\author{Roland Bauerschmidt\thanks{Present address:
    School of Mathematics, Institute for Advanced Study,
    Einstein Drive, Princeton, NJ 08540 USA. E-mail: {\tt brt@math.ias.edu}.},\;
  Wojciech de Roeck\thanks{Present address:
    Theoretical Physics Section, Celestijnenlaan 200d - box 2415,
    3001 Leuven, Belgium. E-mail: {\tt wojciech.deroeck@fys.kuleuven.be}.}\;
  and J\"urg Fr\"ohlich\thanks{Institut f\"ur Theoretische Physik,
    ETH Z\"urich, 8050 Z\"urich, Switzerland. Present address: School of
    Mathematics, Institute for Advanced Study, Einstein Drive, Princeton,
    NJ 08540 USA. E-mail: {\tt juerg@itp.phys.ethz.ch}.}}

\newcommand{\N}{\mathbb{N}}

\newcommand{\R}{\mathbb{R}}

\renewcommand{\Pr}{\mathbb{P}}
\newcommand{\E}{\mathbb{E}}

\newcommand{\dd}[2]{\frac{d #1}{d #2}}
\newcommand{\half}{{\frac{1}{2}}}

\numberwithin{equation}{section}
\newtheorem{thm}{Theorem}[section]

\newtheorem{lem}[thm]{Lemma}
\newtheorem{prop}[thm]{Proposition}

\begin{document}
\maketitle

\begin{abstract}
  We consider a linear Boltzmann equation that arises in a model for
  quantum friction. It describes a particle that is slowed
  down by the emission of bosons.  We study the stochastic process
  generated by this Boltzmann equation and we show convergence of its
  spatial trajectory to a multiple of Brownian motion with exponential
  scaling.  The asymptotic position of the particle is finite in mean,
  even though its absolute value is typically infinite. This is
  contrasted to an approximation that neglects the influence of
  fluctuations, where the mean asymptotic position is infinite.
\end{abstract}

\nocite{Bau09}
\section{Introduction}

\subsection{Motivation} 
One of the interesting themes in non-equilibrium physics is the role
of fluctuations. For example, ratchets (also known as molecular
motors) function because they can convert fluctuations into a directed
motion, something that would be impossible in an equilibrium (detailed
balance) process.  The model described in this paper offers another
very simple illustration of the fundamental role of fluctuations.  It
describes a quantum-mechanical particle (hereafter called ``tracer
particle'') interacting with the atoms in an ideal Bose gas at zero
temperature exhibiting Bose--Einstein condensation. The effective
dynamics of the tracer particle is described by a linear Boltzmann
equation, i.e., by a stochastic process.  Lately, the analysis of such
transport equations has received considerable attention; see
\cite{MR2588245,MR2664334,MR2763032}, and \cite{MR1295030} for a
general introduction.  They can appear as linearizations of nonlinear
Boltzmann equations close to equilibrium, or as models of a single
entity interacting with a medium in thermal equilibrium, as in our
example.  Below, we describe the stochastic process corresponding to
our linear Boltzmann equation and we discuss some of its properties,
postponing a sketch of its derivation from a physical model to
Section~1.4. (Actually, the physics origin of the equation studied in
this note has no importance for our analysis.)

\subsection{Jump process} \label{sec: jump process} Let $(X_t,K_t) \in
\R^3 \times \R^3 $ stand for the position/momentum of the tracer
particle at time $t$.  The momentum $K_t$ is a Markov jump process
defined by a \emph{jump kernel} $q(\cdot,\cdot)$ as follows: For each
momentum $k \in \R^3$, $q(k,\cdot)$ is a finite measure on
$\mathbb{R}^{3}$; its total \emph{weight}, $\Sigma(k):= q(k,\R^3)$, is
the rate at which a jump in momentum space occurs, starting from
momentum $k$, and $q(k, B)/\Sigma(k)$ is the probability for such a
jump to land in the set $B \subset \R^3$.  Formally, the jump kernel
$q$ is given by
\begin{equation} \label{eq:q1}
  q(k, dk') = w  \, \delta(\varepsilon(k')- \varepsilon(k)-\omega(k'-k)) \; dk',
\end{equation}
where the dispersion laws
\begin{equation} \label{eq:eps-omega}
  \varepsilon(k) = \frac{1}{2m} |k|^2, \quad \omega(q) = \frac{1}{2M} |q|^2, \quad (k,q \in \R^3),
\end{equation}
correspond to the kinetic energies, as functions of the momenta, of
the non-relativistic tracer particle and the Bose atoms, respectively,
$m$ and $M$ are their masses, and $w$ is the effective coupling
strength of the interaction between the tracer particle and an atom in
the Bose gas. In what follows, we will choose units such that $w=1$.
Physically, the delta function in \eqref{eq:q1} expresses conservation
of energy in a momentum-preserving process where the tracer particle
excites an atom in the Bose--Einstein condensate to a state of
momentum $k-k'$ and energy $\omega(k-k')$. (This can be viewed as the
emission of a sound wave into the Bose gas.)

The position $X_t$ evolves according to
\begin{equation}
\frac{d X_t}{dt} = \frac{K_t}{m}.
\end{equation}
Note that $X_t$ is a random variable, because $K_t$ is random.
Obviously, since, at zero temperature, the particle can only emit (but
not absorb) sound waves, thereby lowering its energy, we expect that
$K_t \to 0$, i.e., that the particle experiences friction.  Our main
interest is in the behavior of the position $X_t$, including its
asymptotics as $t \rightarrow \infty$.  The following scaling law (cf.\
\eqref{eq:q} below) is crucial for our result:
\begin{equation} \label{eq: scaling reation}
  \lambda q(\lambda k,\lambda B)  = q(k, B), \qquad \text{for any $\lambda >0$.}
\end{equation}
When combined with rotation invariance of the kernel, it leads to the formal relation
\begin{equation} \label{eq: expectation difference}
  \E(K_{t+d t}-K_t| K_t) = - \eta | K_t |  K_t d t, 
\end{equation}
where $\E( \cdot | K_t) $ is the expectation of the momentum process conditioned on $K_t$,
and $\eta >0$ is some friction coefficient depending on $m,M$.

\subsection{Mean field approximation vs.\ fluctuations}
A first guess concerning the behavior of our process is that one may
\emph{neglect} random fluctuations. This amounts to omitting the
expectation $\E$ in \eqref{eq: expectation difference} and pretending
that $K_{t+d t}-K_t$ is \textit{nonrandom} and hence that $K_t$
satisfies the differential equation
\begin{equation} \label{eq: diff equation}
 \dd{K_t }{t}    =  - { \eta | K_t |  K_t }, \qquad \text{with} \text{   }K_{t=0} =K_0,
\end{equation}
whose solution is given by 
\begin{equation}
  K_t    =   (\eta t+1/| K_0 |)^{-1}    \frac{K_0}{| K_0 |}.
\end{equation}
As a consequence, we find that the position $X_t$ diverges logarithmically
\begin{equation}
  |X_t |  \sim \text{const} \, \log t, \qquad \text{as $t\to \infty$.}
\end{equation}
However, after including fluctuations, the behavior of $X_{t}$ turns
out to be different: $\E( X_t )$ remains finite, while typical
trajectories, $(X_t)_{0\leq t < \infty}$, escape to infinity in random
directions, as $t\rightarrow \infty$.  In fact, the scaling relation
\eqref{eq: scaling reation} implies that the typical time elapsing
before the momentum jumps again, starting from a momentum $K_t$, is of
order $| K_t |^{-1}$, hence the typical distance the tracer particle
travels between two successive momentum jumps remains bounded, as
$K_t \to 0$. Rescaling time, we can alternatively think of the path
$(X_t)_{0\leq t \leq T}$ as the path of a particle traveling with unit
speed and randomly changing its direction of motion at a uniform rate,
which is stopped at a time of order $\log T$.  In particular, after
appropriate rescaling, $X_t$ can be expected to converge to a multiple
of Brownian motion.  This is, in fact, our main result, which is
stated precisely in Theorem \ref{thm:convergence}, below.

\subsection{Derivation from first principles}

The physical system motivating our study of the particular model
described above consists of a non-re\-la\-ti\-vist\-ic, quantum tracer
particle of mass $m$ interacting with non-relativistic atoms of mass
$M$ in an ideal Bose gas at zero temperature exhibiting Bose--Einstein
condensation. The Bose--Einstein condensate serves as a reservoir of
atoms that, through soft collisions, the tracer particle can lift to
traveling wave states of non-vanishing momentum and positive kinetic
energy. This corresponds to an emission of \v{C}erenkov radiation of
sound waves into the Bose gas --- similarly to a well known phenomenon
of light emission observed when a charged particle moves through an
optically dense medium at a speed larger than the speed of light in
the medium; see, e.g., \cite{MR0436782}.
One may thus expect that the tracer particle experiences friction and
slows down until, asymptotically, it comes to rest.

The dynamics of the physical system described above is determined by the many-body 
Schr\"o\-din\-ger Hamiltonian
\begin{equation} \label{eq:H_I}
H^{(N)}= -\frac{\Delta_X}{2m} + \sum_{n=1}^{N}\left\lbrace -\frac{\Delta_{x_n}}{2M} + gW(x_{n}-X) \right\rbrace,
\end{equation}
where $\Delta$ denotes the Laplacian (with suitable boundary
conditions imposed at the boundary of a cube in $\mathbb{R}^{3}$ to
which the system is confined), $gW(x_{n}-X)$ is the two-body
interaction potential when the tracer particle is at position $X$ and
the $n^{th}$ atom of the Bose gas at position $x_n$, $n=1, \dots ,N$,
and $g$ is a coupling constant. We assume that $W$ is bounded and of
rapid decrease at $\infty$, and that the density of the gas is
positive and kept constant, as the thermodynamic limit is
approached. The operator $H^{(N)}$ introduced in \eqref{eq:H_I} is
self-adjoint and bounded below on the usual $L^{2}-$space of orbital
$(N+1)-$particle wave functions and generates the time evolution of
the system, for all $N<\infty.$

We are interested in studying the ``effective time evolution" of the
tracer particle when the Bose gas is initially in a state of thermal
equilibrium at some temperature $\beta^{-1}\geq0$. This evolution is
obtained by taking the expectation over the degrees of freedom of the
Bose gas corresponding to an equilibrium state, conditioned on the
state of the tracer particle, and taking $N\to \infty$ to eliminate
finite-size effects.  In \cite{MR1901513}, Erd\H{o}s has shown that,
for positive temperatures, $\beta^{-1} > 0$, and after rescaling $x
\mapsto g^{-2}x$, $t \mapsto g^{-2}t$ (``weak kinetic scaling''), the
limit $g\rightarrow 0$ of the effective dynamics of the tracer
particle exists. Moreover, in this so-called \emph{kinetic limit}, the
evolution of the Wigner distribution of the state of the tracer
particle is given by a solution of the linear Boltzmann equation whose
collision operator is given by the kernel
\begin{multline}
  q(k, dk') = |\widehat{W}(k-k')|^2  \big[
    (N(k-k') +1 ) \delta(\varepsilon(k)-\varepsilon(k')-\omega(k-k')) d k'+ \\
    N(k-k') \delta(\varepsilon(k')-\varepsilon(k)-\omega(k'-k)) d k'
  \big].
\end{multline}
Here $\widehat{W} $ is the Fourier transform of the potential $W$, and $N(q)$ is given by
\begin{equation} \label{eq:N}
  N(q) = \frac{e^{-\beta \omega(q)}}{1-e^{-\beta\omega(q)}}
  = n(\beta\omega(q)), \quad n(x) = \frac{1}{e^x - 1},
\end{equation}
with $n(x)$ the Bose--Einstein distribution function.
The results in \cite{MR1901513} only hold for positive temperatures, $\beta^{-1}>0$.
However, one may formally pass to zero temperature and finds that the collision kernel then reduces to
\begin{equation}
  q(k, dk') = |\widehat{W} (k-k')|^2 \delta(\varepsilon(k)-\varepsilon(k')-\omega(k-k')) dk'.
\end{equation}
Furthermore, since we are interested in phenomena at small momenta,
only the behavior of $\widehat{W}$ near $0$ matters. As $W$ has been
assumed to have rapid decay at $\infty$, $\widehat{W}(k)$ is smooth
near $k=0$, and, to simplify matters, it is reasonable to replace the
function $\widehat{W}$ by the constant $w=|\widehat{W}(0)|^2$, which
is henceforth chosen to be equal to unity. This then yields the jump
kernel introduced in \eqref{eq:q1}.  (The replacement of
$\widehat{W}$ by a constant, $\widehat{W}(0)$, is made in order to
avoid uninteresting technical complications. We expect, however, that
all our results hold for an arbitrary $\widehat{W}$ that is continuous
at $0$.)

As our discussion may have made plausible, a rigorous understanding of
friction in the realm of unitary quantum dynamics or classical
Hamiltonian mechanics still remains an excellent mathematical
challenge.  Some results in this direction may be found in
\cite{MR1924366,MR2212220,MR2858064}.

\section{Result}

To begin with, we carefully introduce the linear Boltzmann equation
and the associated stochastic process that have been described in
Section \ref{sec: jump process}.

On the space of functions $f \in C^1(\R_x^3 \times \R_k^3)$ (we will
sometimes write $\R_x^3$, instead of $\R^3$, to emphasize that the
variable in $\R_x^3 = \R^3$ is denoted by $x$), we define a linear
operator $L$ by setting
\begin{equation} \label{eq:L}
  Lf =
  \frac{k}{m} \cdot \nabla_x f
  + M f,
\end{equation}
where $M$ is the linear collision operator
\begin{equation} \label{eq:M}
  Mf(k) = \int_{\R_k^3} (f(k')-f(k)) \delta(\varepsilon(k)-\varepsilon(k')-\omega(k-k')) \; dk'.
\end{equation}
The $\delta$-function in the integrand enforces conservation of energy.
It is the composition of an ordinary $\delta$-function with the function
\begin{equation} \label{eq:def-F}
  F(k') = F_{k}(k') = \varepsilon(k)-\varepsilon(k')-\omega(k-k')
\end{equation}
and is defined, more precisely, by
\begin{equation} \label{eq:def-delta}
  \int f(k') \delta(F(k')) \; dk' = \int_{F^{-1}(0)} f(k') \, |\nabla F(k')|^{-1} \; H(dk'),
\end{equation}
where
$H$ is the unnormalized surface (Hausdorff) measure on $F^{-1}(0)$.

It is a standard fact that the operator $L$ introduced in \eqref{eq:L}
generates a Markovian semigroup.
We adopt the usual probabilistic point of view in which the semigroup is thought to act
on observables (``Heisenberg picture''), so that the expectation of a test function 
$f: \R_x^d \times \R_k^d \to \R$ at time $t$
is given by $\E(f_t(X_0,K_0))$, where $f_t$ is the unique solution to the forward equation
\begin{equation}
  f_0 = f, \qquad \partial_t f_t = L f_t, \quad (t>0),
\end{equation}
and $\E(\cdot)$ denotes the expectation corresponding to the random variable $(X_0,K_0)$.

In the following, $S^2 \subset \R^3$ denotes the unit sphere, and $U$
is a uniform random variable on $S^2$. Let $(X_t,K_t)$ be a
realization of the Markov process with generator \eqref{eq:L}, started
from a deterministic point $(X_0,K_0)=(x,k)$ (i.e., $\E(f(X_t, K_t)) =
f_t(x,k)$), and let $(B_s)$ be standard \textit{Brownian motion} on
$\R_x^3$. The main result in this note is the following theorem.

\begin{thm} \label{thm:convergence}
  There are explicit positive constants $\theta<1$ and $\sigma$ such
  that, for an arbitrary choice of initial conditions $(X_0,K_0) =
  (x,k) \in \R^3 \times \R^3$,
  \begin{equation} \label{eq:convergence}
    \left(\frac{1}{\sqrt{n}}X_{\theta^{-ns}}\right)_{s} \to (\sigma B_{s})_{s},
    \quad \text{as $n\to\infty$,}
  \end{equation}
  in distribution,
  in the topology of uniform convergence on bounded intervals of the variable $s$. 
\end{thm}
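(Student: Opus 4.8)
We assume $k\neq 0$; if $k=0$ the momentum, and hence $X_t$, never changes. The strategy is to use the scaling relation \eqref{eq: scaling reation} to decouple the magnitude $|K_t|$ from the rest of the dynamics, to recognise $X_t$ as a unit-speed random flight observed after an amount of ``internal time'' (number of collisions) that grows only logarithmically in $t$, and to derive the invariance principle from a functional CLT for the flight combined with a functional law of large numbers for the internal clock.

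\emph{Step 1: the embedded jump chain.} Let $0=t_0<t_1<\cdots$ be the jump times of $K_t$, and set $\rho_\nu:=|K_{t_\nu}|$, $\Omega_\nu:=K_{t_\nu}/\rho_\nu\in S^2$. Since $\varepsilon$ is strictly increasing in $|k|$ and each collision lowers the tracer's energy, $(\rho_\nu)_\nu$ is non-increasing. Rotation invariance and \eqref{eq: scaling reation} imply: (i) the total rate is $\Sigma(k)=\Sigma_0|k|$ with $\Sigma_0>0$ independent of direction (so the typical time between jumps from $K_t$ is of order $|K_t|^{-1}$, as remarked in the Introduction); (ii) given $\Omega_\nu$, the new direction $\Omega_{\nu+1}$ has a law independent of $\rho_\nu$, so $(\Omega_\nu)_{\nu\ge0}$ is a Markov chain on $S^2$ with rotation-covariant kernel $P$; (iii) $\rho_{\nu+1}=R_\nu\rho_\nu$ with $R_\nu\in(0,1]$ a functional of the $\nu$-th jump. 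Because the holding time of a Markov jump process is exponential with the current total rate and independent of the destination, the variables $E_\nu:=\Sigma_0\rho_\nu(t_{\nu+1}-t_\nu)$ are i.i.d.\ $\mathrm{Exp}(1)$, independent of the chain $(\Omega_\nu)_\nu$. As $K_s$ is constant on $[t_\nu,t_{\nu+1})$,
\[
  X_{t_N}=X_0+\frac1m\sum_{\nu=0}^{N-1}K_{t_\nu}\,(t_{\nu+1}-t_\nu)=X_0+\frac{1}{m\Sigma_0}\sum_{\nu=0}^{N-1}E_\nu\,\Omega_\nu,
\]
so, as a function of the jump index $N$, $X$ is (up to the fixed factor $(m\Sigma_0)^{-1}$) the position of a particle that travels a straight distance $E_\nu$ in direction $\Omega_\nu$ during the $\nu$-th step: a random flight with i.i.d.\ step lengths $E_\nu$ and direction chain $(\Omega_\nu)$.

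\emph{Step 2: ergodicity, CLT in the jump index, and the internal clock.} The collision kinematics give, for $k\neq0$, the admissible set $\{k':F_k(k')=0\}=\{k':|k'-ak|=(1-a)|k|\}$ with $a=m/(m+M)\in(0,1)$, on which $\nabla F_k\neq0$; hence $P$ has a smooth transition density and, after finitely many steps, a density bounded below on $S^2\times S^2$, so $P$ is uniformly ergodic, with unique invariant law the uniform measure $\pi$ (which has mean zero). This yields two facts. First, $\log R_\nu<0$ a.s.\ and $\E_\pi|\log R|<\infty$, so $\nu^{-1}\log\rho_\nu\to-\gamma$, $\gamma:=-\E_\pi[\log R]>0$, with $O(\sqrt\nu)$ fluctuations; since $t_{\nu+1}-t_\nu=E_\nu/(\Sigma_0\rho_\nu)$, the sum $t_N$ is dominated by its last few terms and $\log t_N=\gamma N+O(\sqrt N)$. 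Second, the Poisson equation $h-Ph=\mathrm{id}_{S^2}$ is solvable in $C(S^2;\R^3)$ (as $\E_\pi[\mathrm{id}]=0$ and $P$ has a spectral gap), furnishing a martingale approximation for $\sum_{\nu<N}\Omega_\nu$; combined with the independent factors $E_\nu$ and the martingale functional CLT this gives
\[
  \Big(\tfrac{1}{\sqrt N}\,X_{t_{\lfloor Ns\rfloor}}\Big)_{s}\ \longrightarrow\ \big(\sigma B_s\big)_{s}\qquad(N\to\infty),
\]
with $\sigma^2$ the associated Green--Kubo sum; rotation invariance makes the limiting covariance isotropic and the $E_\nu$ term keeps $\sigma>0$. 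Finally, the displacement of $X$ over a single inter-jump interval has modulus $E_\nu/(m\Sigma_0)$, so replacing $X_t$ by $X_{t_{N(t)}}$, $N(t):=\#\{\nu:t_\nu\le t\}$, perturbs the rescaled process by $o(1)$, uniformly on bounded $s$-intervals.

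\emph{Step 3: passage to physical time.} From $\log t_N=\gamma N+O(\sqrt N)$ we obtain $N(t)=\gamma^{-1}\log t+O(\sqrt{\log t})$. Set $\theta:=e^{-\gamma}\in(0,1)$, so $t=\theta^{-ns}$ gives $\gamma^{-1}\log t=ns$ and hence $N(\theta^{-ns})=ns+O(\sqrt{ns})$; since the $s$-intervals on which $N(\theta^{-ns})$ is constant have width and height $O(1/n)$, this upgrades to the functional law of large numbers $\sup_{0\le s\le S}\bigl|\,n^{-1}N(\theta^{-ns})-s\,\bigr|\to0$ in probability, for every $S<\infty$. Writing $\phi_n(s):=n^{-1}N(\theta^{-ns})$, we have $\tfrac{1}{\sqrt n}X_{\theta^{-ns}}=\tfrac{1}{\sqrt n}X_{t_{\lfloor n\phi_n(s)\rfloor}}+o(1)$ uniformly on compacts; since the invariance principle of Step~2 has a continuous limit and $\phi_n\to\mathrm{id}$ uniformly on compacts, the standard random-time-change argument (and the fact that Skorokhod convergence to a continuous limit is uniform convergence) yields $\big(\tfrac{1}{\sqrt n}X_{\theta^{-ns}}\big)_s\to(\sigma B_s)_s$ in the stated topology. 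The delicate point is precisely this last step: one must control the internal clock $N(t)$ \emph{as a process in $s$}, uniformly on bounded intervals, despite its exponentially large and randomly fluctuating increments, and check that its $O(\sqrt{\log t})$ fluctuation, and any correlation between it and the position increments, are washed out at scale $\sqrt n$. The remaining ingredients --- the kinematic description of $P$ and its uniform ergodicity, the Markov-chain/martingale CLT, and the interpolation bound --- are standard but require some bookkeeping.
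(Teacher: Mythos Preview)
Your sketch is correct and follows essentially the same route as the paper: decompose $(X_t,K_t)$ into the embedded jump chain and the counting process $N(t)$, express $X$ at jump times as a random flight $\sum E_\nu\Omega_\nu$, prove a functional CLT for this sum via the Poisson equation and a martingale CLT, prove a functional LLN for $n^{-1}N(\theta^{-ns})$, and compose the two via a random time change (continuity of the Brownian limit converts Skorokhod to uniform convergence).

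Two minor differences worth noting. First, the paper observes that the log-contraction ratios $\log R_\nu$ are in fact \emph{i.i.d.}\ (since $|a y+(1-a)U|$ has a law independent of $y\in S^2$), and uses Cram\'er's theorem to get exponential tail bounds on $\frac{1}{n}\log\rho_n$; Borel--Cantelli then yields the clock LLN \emph{almost surely}. Your ergodic/CLT treatment gives only convergence in probability, which is enough for the theorem but weaker than the paper's Proposition on $N_{\theta^{-ns}}/n$. Second, the paper solves the Poisson equation for $V(y,l)=ly$ on $S^2\times\R_+$, treating $(Y_j,\lambda_j)$ as the chain; you solve $h-Ph=\mathrm{id}$ on $S^2$ alone, which does not directly produce a martingale approximation for $\sum E_\nu\Omega_\nu$. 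This is easily repaired (e.g.\ split $E_\nu\Omega_\nu=(E_\nu-1)\Omega_\nu+\Omega_\nu$, the first piece already a martingale difference), but it is a small gap in the sketch as written.
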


This result shows in particular that, no matter what the initial
conditions are, the position of the tracer particle is not confined
to any bounded region in $\R^3$, but that the distance of its
position from the initial position only grows very slowly. Its
motion is \textit{diffusive} on an exponential time scale.
The mean position of the particle remains, however, finite, as time
tends to $\infty$, because the momentum of the particle rapidly
loses memory of its initial values. More precisely, from the proof
of Theorem~\ref{thm:convergence}, we will infer the following
result.
\begin{thm} \label{thm:position}
  \begin{equation} \label{eq:Kasymp}
    |K_{t}| = |K_0|t^{-1+o(1)} \quad \text{almost surely,}
  \end{equation}
  and
  \begin{equation} \label{eq:XEinf}
    \lim_{t\to \infty} \E|X_t|=\infty \qquad \text{but} \qquad \sup_{t\geq 0} |\E X_t | <\infty.
  \end{equation}
\end{thm}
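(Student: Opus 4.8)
The plan is to extract both statements from the structure already built up to prove Theorem~\ref{thm:convergence}, in particular from the time-change that reduces the momentum process to a constant-speed random walk on directions run until a time of order $\log T$. First I would set up the natural change of variables: writing $K_t = r_t U_t$ with $r_t = |K_t| \geq 0$ and $U_t \in S^2$, the scaling relation \eqref{eq: scaling reation} together with rotation invariance makes $(r_t)$ and $(U_t)$ decouple in a precise sense after a time substitution. Concretely, introduce the clock $\tau(t) = \int_0^t r_u \, du$ (or its inverse); under this clock the jump rate becomes constant and $U$ performs an autonomous jump process on $S^2$ with a fixed, rotation-invariant kernel, while $r_t$ solves a one-dimensional equation driven by the jumps. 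The almost-sure estimate \eqref{eq:Kasymp} is then a law-of-large-numbers statement for this rescaled radial process: along the jump times, $\log(1/r)$ increases by i.i.d.-like increments with finite positive mean (this mean being exactly the coefficient governing $\eta$ in \eqref{eq: expectation difference}), so $-\log r_t \sim c N(t)$ where $N(t)$ is the number of jumps up to time $t$; combining this with the fact that the clock $\tau$ grows like $\int r \sim \sum r$ over excursions, one gets $N(t) = (1+o(1))\log t$ and hence $r_t = |K_0| t^{-1+o(1)}$ almost surely. The $o(1)$ absorbs the fluctuations of the partial sums around their means via Borel--Cantelli.

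For the second claim, $\sup_{t\geq 0}|\E X_t| < \infty$, the key is a martingale/symmetry argument for the \emph{vector} $\E X_t$, as opposed to the scalar $\E|X_t|$. Since $\dot X_t = K_t/m$, we have $\E X_t = x + \frac1m \int_0^t \E K_u \, du$, so it suffices to show $\int_0^\infty |\E K_u|\, du < \infty$. Here I would use that $\E K_u$ decays like the deterministic mean-field solution predicts for the \emph{conditional expectation}: from \eqref{eq: expectation difference}, $\frac{d}{dt}\E K_t = -\eta\, \E(|K_t| K_t)$, and since $|K_t| = |K_0| t^{-1+o(1)}$ almost surely with good moment control, $\E(|K_t| K_t) = O(t^{-1-\delta})$ for the purposes of integrability — more carefully, one shows $\E|K_t|^2$ decays integrably fast (faster than $1/t$) because $|K_t| \sim t^{-1}$, giving $\int_0^\infty \E|K_t|^2\,dt < \infty$, and then $|\E K_t| \leq (\E|K_t|^2)^{1/2}$ is not quite enough, so instead one integrates the ODE for $\E K_t$ directly: $|\E K_t| \leq |\E K_0| - 0 = $ bounded is false; rather $\frac{d}{dt}|\E K_t|^2 = -2\eta\, \E K_t \cdot \E(|K_t|K_t)$, and one bootstraps using $\E(|K_t| K_t)$ pointing essentially along $\E K_t$ by rotation invariance around the initial direction, to get $|\E K_t| = O(1/t)$, whence $\int_0^\infty |\E K_t|\,du$ converges once one improves the exponent slightly using the $o(1)$ in \eqref{eq:Kasymp}. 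The cleanest route is: condition on the direction process, use that along a fixed direction-trajectory the radial decay is $t^{-1+o(1)}$, and average — the angular average produces enough cancellation in the \emph{vector} $\E K_t$ to make it integrable even though $\E|K_t|$ is only $\sim 1/t$ (borderline non-integrable), which is exactly why $\E|X_t|\to\infty$ while $\E X_t$ stays bounded.

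Finally, $\lim_{t\to\infty}\E|X_t| = \infty$ follows from the same picture read without cancellation: $\E|X_t| \geq |X_0| - $ something, and more usefully, one shows $\E|X_t|$ grows because the displacement accumulated is a sum of unit-length steps whose number grows like $\log t$ and whose directions decorrelate, so $X_t$ behaves like $\sqrt{\log t}$ in norm in mean — this is immediate from Theorem~\ref{thm:convergence} itself, which gives $X_{\theta^{-ns}} \approx \sqrt{n}\,\sigma B_s$, hence $\E|X_{\theta^{-ns}}| \gtrsim \sqrt{n}\,\E|B_1| \to \infty$; so this half is essentially a corollary of the main theorem and I would just cite it. The main obstacle is the second part of \eqref{eq:XEinf}: proving $\sup_t |\E X_t| < \infty$ requires quantitative control on how fast $\E K_t$ (the \emph{vector}) decays, which is strictly stronger than the a.s.\ statement \eqref{eq:Kasymp} because $1/t$ is the borderline of integrability — one must exploit the rotational structure (the process started from $(x,k)$ has its mean momentum always parallel to $k$, and the magnitude obeys a closed inequality) to push the decay of $|\E K_t|$ below $1/t$, or equivalently to show the $o(1)$ in the exponent is effective enough in expectation. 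I would handle this by deriving a differential inequality for $|\E K_t|$ from \eqref{eq: expectation difference} combined with a lower bound $\E(|K_t| K_t)\cdot \hat k \geq c\,(\E|K_t|)\,|\E K_t|$ and the a.s.\ asymptotics, closing the estimate.
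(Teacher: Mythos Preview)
Your treatment of \eqref{eq:Kasymp} and of $\E|X_t|\to\infty$ is essentially what the paper does: the radial process has i.i.d.\ log-increments along the skeleton chain, the strong law gives $\log|K_n|-\log|K_0|=n\log\theta(1+o(1))$, and Proposition~\ref{prop:lln} gives $N_t=(1+o(1))\log t/(-\log\theta)$; composing yields \eqref{eq:Kasymp}. For $\E|X_t|\to\infty$ the paper, like you, simply reads it off the functional CLT (Proposition~\ref{prop:clt}) via Markov's inequality.

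The gap is in your argument for $\sup_t|\E X_t|<\infty$. You correctly identify that $\E|K_t|\sim 1/t$ is borderline non-integrable and that angular cancellation must supply the missing decay, but the mechanism you propose --- a differential inequality for $|\E K_t|$ built from \eqref{eq: expectation difference} plus a lower bound like $\E(|K_t|K_t)\cdot\hat k\geq c(\E|K_t|)|\E K_t|$ --- does not close: the inequality $\frac{d}{dt}|\E K_t|\leq -c(\E|K_t|)|\E K_t|$ with $\E|K_t|\sim 1/t$ only gives $|\E K_t|=O(t^{-c})$, and there is no reason the constant $c$ arising from your Cauchy--Schwarz-type bound should exceed $1$. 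Trying to extract extra decay from the $o(1)$ in \eqref{eq:Kasymp} does not help either, since that $o(1)$ goes both ways and in expectation you cannot simply borrow from it.

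The paper avoids this continuous-time integrability issue altogether by working with the discrete skeleton representation \eqref{eq:Xtsum}: up to a bounded remainder, $X_t-X_0$ is a constant times $\sum_{j<N_t}\lambda_jY_j$, so it suffices to show $\sup_n\big|\E\sum_{j\leq n}\lambda_jY_j\big|<\infty$. Since the chain $(Y_j)$ satisfies a Doeblin condition (established in the proof of Lemma~\ref{lem:cov}), the law of $Y_j$ converges to the uniform measure on $S^2$ \emph{exponentially fast in $j$} in total variation; hence $|\E(\lambda_jY_j)|=|\E Y_j|\leq Ce^{-cj}$, and the series of expectations converges absolutely. This is the key idea you are missing: the exponential mixing of the \emph{direction} chain, in the jump-count variable, is what makes the vector expectation summable --- no delicate continuous-time rate for $|\E K_t|$ is needed.
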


The proofs of these theorems are probabilistic and rely on a
decomposition of the continuous-time Markov process $(X_t, K_t)$ into
its discrete-time skeleton process and on a process of jump times.
For the jump times, we prove a law of large numbers on an exponential
scale, and for the skeleton process we derive a functional central
limit theorem.  The theorems stated above follow from these two
results by standard arguments.

\medskip

The constants $\theta \in (0,1)$ and $\sigma > 0$ in Theorem~\ref{thm:convergence}
are explicit functions of $m$ and $M$ and are given by
\begin{equation} \label{eq:thetasigmadef1}
  \log\theta
  = \half \left(\frac{(1-2 a)^2 \log |1-2 a|}{2 (a-1) a}-1\right),
  \qquad
  \sigma^2 = \frac{2}{3(1-b)} \frac{(m+M)^4}{16\pi^2 m^4M^4},
\end{equation}
where
\begin{equation} \label{eq:abdef1}
  a := \frac{m}{m+M},
  \qquad
  b =  \half \frac{|1-2 a| \left(2 a^2+a-1\right)-3 a+1}{3 (a-1) a^2}
  .
\end{equation}

In our proofs, the constants $b$ and $\theta$ arise from the equations
\begin{align} \label{eq:constU}
  by &=
  \frac{1}{4\pi} \int_{S^2} \frac{ay+(1-a)z}{|ay+(1-a)z|} \; H(dz),
  \\
  \log\theta &=
  \frac{1}{4\pi} \int_{S^2} \log(|ay+(1-a)z|) \; H(dz)
  ,
\end{align}
for any $y \in S^2$, as can be verified by using polar coordinates on $S^2$.

\section{Proof of Theorem~\ref{thm:convergence}}

\subsection{Stochastic decomposition}

Recall that $a = m/(m+M)$. A straightforward calculation shows that $F_k(k') = 0$, with $F$ given by \eqref{eq:def-F}, is equivalent to 
\begin{equation}
  k' \in ak+(1-a)|k|S^{2} \subset \R^3;
\end{equation}
see Figure~\ref{fig:surfaces}.
\begin{figure}[h]
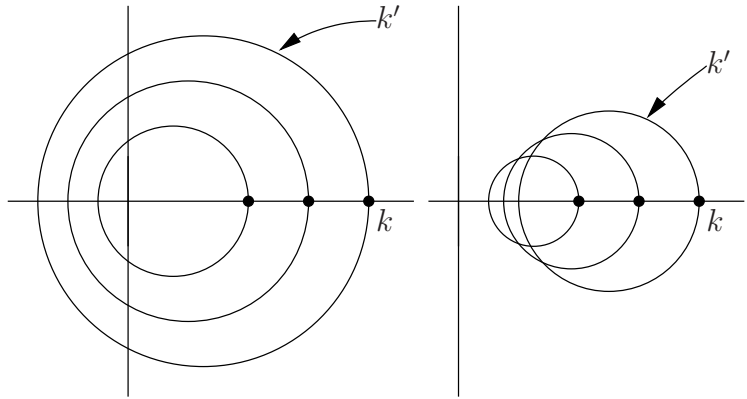

  \begin{center}
    \input{a_leq_0.5.pspdftex}
    \input{a_geq_0.5.pspdftex}
  \end{center}
  \caption{The surfaces $k'\in ak+(1-a)|k|S^{2}$ for $a < \half$ (left) and $a> \half$ (right).}
  \label{fig:surfaces}
\end{figure}
This observation and the equation
\begin{equation}
  \nabla_{k'} F_k (k')
  = - \nabla \varepsilon(k') + \nabla \omega(k-k')
  = \frac{1}{M} \left(k-\frac{k'}{a}\right)
\end{equation}
imply that, for $F_{k}(k')=0$,
\begin{equation}
  |\nabla_{k'} F_k (k')|
  = \frac{1-a}{aM} |k| = \frac{|k|}{m}
  .
\end{equation}
Let $\lbrace u(k, dk')\rbrace$ be the family of unnormalized surface
(Hausdorff) measures on the surfaces given by $ak+(1-a)|k|S^{2}$, for
$k\in \mathbb{R}^{3}$. They are given by affine transformations of the
uniform measure on the unit sphere.  Expressed in terms of these
measures, the kernel of the collision operator \eqref{eq:M} is given
by
\begin{equation} \label{eq:q}
  q(k, dk') = \frac{m}{|k|} \; u(k, dk').
\end{equation}
For a given $k$, the \emph{scattering rate} $\Sigma(k)$ is the total weight of $q(k,\cdot)$:
\begin{equation} \label{eq:Sigma}
  \Sigma(k) = q(k, \R^3) 
  = \frac{m}{|k|}u(k,\R^3) = m(1-a)^2 |S^2| |k| = 4\pi \frac{M^2m}{(m+M)^2} |k|
  .
\end{equation}
The normalized transition kernel,
\begin{equation} \label{eq:p}
  p(k, dk') = \Sigma(k)^{-1} q(k, dk'),
\end{equation}
is simply the uniform probability measure on the sphere $ak+(1-a)|k|S^2$.

Let $(\Omega, \Sigma, \Pr)$ be a probability space $\Omega$ with sigma-algebra $\Sigma$
and probability measure $\Pr$ on which the following random variables are defined:
$(K_n)$ is a Markov chain on $\R^3$ with $K_0=k$ and transition probabilities given in \eqref{eq:p},
and $(\lambda_n)$ is a sequence of independent random variables with exponential
probability distribution with mean 1.
We also let $X_0$ be the deterministic random variable $X_0=x$.
Let $\Sigma_n$ be the sigma-algebra generated by $(\lambda_j,K_j)_{j\leq n}$.
Let
\begin{equation}
  T_n = \sum_{j=0}^{n-1} \frac{\lambda_j}{\Sigma(K_j)},
\end{equation}
and let $N_t$ be its right-continuous inverse. Then $T_n$ is the time at which the $n$th jump occurs,
and $N_t$ is the number of jumps up to time $<t$. We set 
\begin{equation} \label{eq:KtXt}
  K_t = K_{N_t}, \quad X_t = X_0 + \int_0^t \frac{K_t}{m} \; dt,
\end{equation}
where, with some abuse of notation, $K$ is an abbreviation for both
the original continuous-time process $K_t$ and the discrete time
process $K_n$; (when using a subscript $t \in \R_+$ we mean the former
process, while, for a subscript $n \in \N$, the latter process is
meant).

\begin{prop} \label{prop:markovboltzmann}
  $(X_t,K_t)_{t\geq 0}$ is a strong Markov process. 
  Let $(\Gamma_t)$ be its semigroup defined by $(\Gamma_tf)(x,k) = \E^{x,k}(f(X_t, K_t))$,
  where $\E^{x,k}$ is the expectation with $(X_0,K_0)=(x,k)$.
  Then the generator of $(\Gamma_t)_{t\geq 0}$ is the operator $L$ given by \eqref{eq:L}.
\end{prop}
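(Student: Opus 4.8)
The plan is to prove Proposition~\ref{prop:markovboltzmann} by verifying that the explicitly constructed process $(X_t,K_t)$ is a standard piecewise-deterministic Markov process (PDMP) in the sense of Davis, and then identifying its extended generator with $L$. First I would establish the \emph{strong Markov property}. The pair $(X_t,K_t)$ has been built from the discrete skeleton $(K_n)$ (a Markov chain with transition kernel $p$) together with the holding times $\lambda_j/\Sigma(K_j)$, where $(\lambda_j)$ are i.i.d.\ mean-$1$ exponentials independent of the chain. Between jump times $T_n$ and $T_{n+1}$ the momentum $K_t$ is constant and $X_t$ flows deterministically along $\dot X = K/m$; at $T_{n+1}$ the momentum jumps according to $p(K_{T_n},\cdot)$ and $X$ is continuous. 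This is exactly the structure of a PDMP with no active boundary, constant jump rate $\Sigma(k)$ along each deterministic orbit, and jump measure $p$. One subtlety to check is \emph{non-explosion}: since $|K_n|$ is a.s.\ decreasing in a suitable sense (indeed $\varepsilon(K_n)$ is strictly decreasing because $\omega>0$ off the origin) and $\Sigma(k)$ is proportional to $|k|$, the holding times $\lambda_j/\Sigma(K_j)$ grow, so $T_n\to\infty$ a.s.\ and the process is defined for all $t\ge 0$. Given non-explosion, the strong Markov property follows from the memorylessness of the exponential holding times and the strong Markov property of the embedded chain, by the standard renewal argument: conditionally on $\Sigma_{N_\tau}$ at a stopping time $\tau$, the residual holding time is again exponential with the right rate and independent of the past, so the post-$\tau$ process is a copy of the process started afresh from $(X_\tau,K_\tau)$.

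Next I would \emph{identify the generator}. For $f\in C^1(\R^3_x\times\R^3_k)$ with suitable decay (or first for nice test functions, then by a density/core argument), I would compute $\frac{d}{dt}\big|_{t=0}\E^{x,k}f(X_t,K_t)$ by conditioning on whether a jump has occurred by time $t$. The probability of a jump in $[0,t]$ starting from momentum $k$ is $\Sigma(k)t+o(t)$; on the no-jump event, $K_t=k$ and $X_t=x+tk/m+o(t)$, contributing $\frac{k}{m}\cdot\nabla_x f(x,k)$ to the derivative; on the one-jump event, the jump lands according to $p(k,\cdot)$ and the position has moved only $o(1)$, contributing $\Sigma(k)\int (f(x,k')-f(x,k))\,p(k,dk')$. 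By \eqref{eq:p}, \eqref{eq:q}, and \eqref{eq:Sigma}, $\Sigma(k)\,p(k,dk') = q(k,dk') = \frac{m}{|k|}u(k,dk')$, so this second term is exactly $Mf(x,k)$ with $M$ as in \eqref{eq:M} (using the identification \eqref{eq:def-delta} of the $\delta$-function with the surface measure divided by $|\nabla F|=|k|/m$). The probability of two or more jumps in $[0,t]$ is $O(t^2)$ and contributes nothing to the derivative. Hence $(\Gamma_t)$ has extended generator $Lf = \frac{k}{m}\cdot\nabla_x f + Mf$, which is \eqref{eq:L}.

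The step I expect to be the main technical obstacle is making the generator identification rigorous on an appropriate domain rather than merely formally. The collision operator $M$ involves a singular kernel with a $\frac{1}{|k|}$ prefactor, so near $k=0$ the scattering rate $\Sigma(k)\to 0$; this is benign for the differentiability of $t\mapsto\Gamma_t f$ but one must be careful that $Mf$ is well-defined and continuous for $f\in C^1$, which follows from the explicit description of $u(k,\cdot)$ as an affine image of the uniform measure on $S^2$ (so $Mf(k) = m(1-a)^2|k|\,|S^2|\,\fint_{S^2}(f(ak+(1-a)|k|z)-f(k))\,H(dz)/|S^2|$, manifestly continuous in $k$, vanishing at $k=0$). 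The cleanest route is to note that PDMP theory (Davis) already supplies the extended generator of $(X_t,K_t)$ in exactly the form "transport term plus $\lambda(\int \cdots)$", so the real content is the bookkeeping that matches Davis's jump-rate-times-transition-kernel expression with $M$ via the chain of identities \eqref{eq:q}–\eqref{eq:p}; this is a routine but slightly fiddly computation with the surface measures. Finally, to conclude that $L$ (not just an extension of it) generates $(\Gamma_t)$, I would invoke that $C_c^\infty(\R^3\times\R^3)$ (or a comparable space) is a core, using the Hille–Yosida / Trotter–Kurtz machinery or, more simply, uniqueness of the PDMP associated to the given characteristics, which is standard.
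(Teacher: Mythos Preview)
Your proposal is correct and follows essentially the same idea as the paper: both arguments condition on the first jump to pass from the explicit construction to the generator $L$. The paper's execution is slightly different in packaging: rather than computing the short-time expansion $\frac{d}{dt}\big|_{t=0}\Gamma_t f$ and invoking PDMP theory, it writes out the full first-jump decomposition of $\Gamma_t f(x,k)$ as an integral equation and then recognizes it as the Duhamel (variation-of-constants) formula for the semigroup generated by $L = A - \Sigma + K$, with $A = \tfrac{k}{m}\cdot\nabla_x$ the advection and $K$ the gain operator with kernel $q$. Your route via the infinitesimal derivative (or directly via Davis's extended-generator formula) is equivalent and arguably more transparent about where each term comes from; the paper's Duhamel identification has the advantage of sidestepping any explicit core discussion, since matching the integral equation determines the semigroup. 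Your remarks on non-explosion and the regularity of $Mf$ near $k=0$ are accurate and more careful than the paper, which treats these as implicit in calling the argument ``standard.''
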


The proof is a standard argument, presented for completeness in
Appendix~\ref{sec:pfmarkovboltzmann}.
We study the skeleton process $(K_n)$ in terms of its polar decomposition
\begin{equation} \label{e:def-YR}
  Y_n = K_n/|K_n|, \quad R_n = |K_n|.
\end{equation}
The main observation is that $(Y_n)$ is a Markov chain on $S^2$, and
\begin{align} \label{eq:Xtsum}
  X_{t}
  &= X_0  + \frac{1}{m} \sum_{j=0}^{N_t-1} (T_{j+1}-T_{j}) K_j  + \frac{1}{m} (t-T_{N_t}) K_{N_t} \nonumber\\
  &= X_0 + \frac{(m+M)^2}{4\pi m^2M^2} \sum_{j=0}^{N_t-1} \lambda_j Y_j + \frac{1}{m} (t-T_{N_t}) K_{N_t}
  .
\end{align}

Theorem~\ref{thm:convergence} is a consequence of the following two propositions for the sum over $j$ of
$\lambda_j Y_j$ and for the number of jumps $N_t$.
In the statements of the propositions, $\theta,b \in (0,1)$ are the
constants defined in \eqref{eq:thetasigmadef1}--\eqref{eq:abdef1}, and
$D([0,s_0],\R^3)$ is the Skorohod space of right-continuous functions
with left-limits (\emph{c\`adl\`ag functions}) $[0,s_0] \to \R^3$,
endowed with the Skorohod topology; see, e.g., \cite{MR1700749}.

\begin{prop} \label{prop:clt}
  For $s \in [0,s_0]$, where $s_0>0$ is arbitrary,
  \begin{equation} \label{eq:clt}
    \left(\frac{1}{\sqrt{n}} \sum_{j=0}^{[ns]} \lambda_j Y_j\right)_{s} \stackrel{D}{\to} \left(\sqrt{\frac{2}{3(1-b)}} B_s\right)_{s},
    \quad \text{as $n\to\infty$,}
  \end{equation}  
  in distribution in the Skorohod space $D([0,s_0],\R^3)$.
\end{prop}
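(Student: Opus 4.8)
The plan is to prove a functional central limit theorem for the partial sums $S_n = \sum_{j=0}^{n-1}\lambda_j Y_j$ by combining a martingale/mixing CLT for the $S^2$-valued Markov chain $(Y_n)$ with the fact that, conditionally on the chain, the $\lambda_j$ are i.i.d.\ mean-$1$ exponentials. First I would analyze the transition kernel of $(Y_n)$: by \eqref{eq:p} the next momentum $K_{n+1}$ is uniform on the sphere $aK_n + (1-a)|K_n|S^2$, so $Y_{n+1} = K_{n+1}/|K_{n+1}|$ has a law on $S^2$ that depends on $Y_n$ only through the deterministic recipe ``pick $z$ uniform on $S^2$ and set $Y_{n+1} = (aY_n + (1-a)z)/|aY_n + (1-a)z|$.'' This kernel is rotation-equivariant, has a smooth strictly positive density with respect to the uniform measure on $S^2$ (the map is a smooth diffeomorphism of $S^2$ with nonvanishing Jacobian for $a\neq \tfrac12$, and one checks the case $a=\tfrac12$ separately or by continuity), hence $(Y_n)$ is uniformly ergodic with the uniform measure $\mu$ on $S^2$ as its unique invariant measure, and satisfies a Doeblin condition, giving geometric mixing. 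Under $\mu$, $\E_\mu Y = 0$ by symmetry.

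Next I would compute the limiting covariance. Write $S_n = \sum_{j=0}^{n-1}(\lambda_j Y_j - Y_j) + \sum_{j=0}^{n-1} Y_j$; the first sum, conditionally on $(Y_j)$, is a sum of independent mean-zero terms with conditional covariance $\operatorname{Var}(\lambda)\, Y_j Y_j^{T} = Y_j Y_j^{T}$, contributing $n \cdot \tfrac13 I$ in the limit (since $\E_\mu Y Y^T = \tfrac13 I$), while the second sum is a sum of a mixing sequence with Green--Kubo covariance $\Sigma_Y = \sum_{k\in\Z}\E_\mu(Y_0 \otimes Y_k)$. By rotation invariance $\Sigma_Y = c\, I$ for a scalar $c$, and one identifies $c$ via the resolvent of the chain acting on the first spherical harmonics $Y \mapsto Y$: because the kernel maps the degree-one harmonics into themselves with eigenvalue $\E(aY + (1-a)z)/|aY+(1-a)z| \cdot Y$-type factor, the relevant eigenvalue is exactly the constant $b$ appearing in \eqref{eq:constU}, so $\Sigma_Y = \tfrac13\cdot\frac{1+b}{1-b} I$ after summing the geometric series $\sum_{k}b^{|k|}$... and adding the diagonal contribution $\tfrac13 I$ from the $\lambda$-fluctuations gives total asymptotic covariance $\tfrac13\big(1 + \tfrac{1+b}{1-b} + (\text{cross terms})\big)I$; I would carefully track the cross terms $\sum_j \E(\lambda_j Y_j \otimes Y_k)$, which vanish for $k\neq j$ by the martingale structure and contribute only the already-counted diagonal, to land on the stated variance $\tfrac{2}{3(1-b)}$.

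For the functional (process-level) statement I would verify the conditions of a martingale FCLT (e.g.\ McLeish's, or the Kipnis--Varadhan corrector method): solve the Poisson equation $(I-P_Y)g = \mathrm{id}$ on $S^2$ for the degree-one harmonics — trivial here since $P_Y$ acts as multiplication by $b$ on that subspace, so $g = (1-b)^{-1}\mathrm{id}$ — write $Y_j = M_j - M_{j+1}$-type telescoping plus a martingale-difference term, combine with the independent $\lambda$-martingale-differences, check the conditional Lindeberg condition (immediate from $|Y_j|=1$ and $\lambda_j$ having exponential moments) and convergence of conditional quadratic variations (follows from the ergodic theorem for $(Y_j)$), yielding convergence of $\big(n^{-1/2}S_{[ns]}\big)_s$ to Brownian motion with the computed covariance; tightness in Skorohod space is standard once one has $L^2$ increment bounds uniform in $n$, which the mixing provides. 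The main obstacle I anticipate is the bookkeeping in the covariance computation — in particular making the reduction to the one-dimensional eigenvalue problem on spherical harmonics fully rigorous (confirming $P_Y$ preserves the three-dimensional space of degree-one harmonics and acts as the scalar $b$ there, exactly matching \eqref{eq:constU}) and then correctly assembling the three contributions ($\lambda$-fluctuation diagonal, the $\sum b^{|k|}$ Green--Kubo sum, and cross terms) so that everything collapses to $\tfrac{2}{3(1-b)}$ rather than to some other rational function of $b$; the Markov-chain ergodicity and the FCLT machinery are comparatively routine given the strong Doeblin condition.
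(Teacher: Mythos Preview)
Your proposal is correct and follows essentially the same route as the paper: establish Doeblin ergodicity for $(Y_n)$, exploit that the transition operator acts on the linear functions $y\mapsto y$ as multiplication by the scalar $b$ from \eqref{eq:constU}, solve the resulting trivial Poisson equation, and invoke a martingale functional CLT with a Lindeberg check. The only organizational difference is that the paper regards $(Y_n,\lambda_n)$ as a single Markov chain on $S^2\times(0,\infty)$ and solves the Poisson equation $V=(1-P)h$ for $V(y,l)=ly$ directly on the joint space (obtaining $h(y,l)=ly+\tfrac{b}{1-b}y$), whereas you first split $\lambda_jY_j=(\lambda_j-1)Y_j+Y_j$ and then correct only the second piece on $S^2$. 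Your covariance bookkeeping is right: the cross terms vanish because $\lambda_j$ is independent of the entire $Y$-chain and has mean one, so $\tfrac13+\tfrac13\cdot\tfrac{1+b}{1-b}=\tfrac{2}{3(1-b)}$ as claimed.

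One small inaccuracy to fix: for $a>\tfrac12$ the sphere $ak+(1-a)|k|S^2$ does not contain the origin, so the one-step law of $Y_{n+1}$ given $Y_n$ is supported only on a spherical cap, not all of $S^2$; the one-step density is \emph{not} strictly positive. The paper deals with this by taking the $n$-step kernel for $n$ large enough to get a uniform lower bound (Doeblin's condition then holds), and you should do the same rather than claim positivity at step one.
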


\begin{prop} \label{prop:lln}
  Uniformly for $s \in [0,s_0]$, where $s_0>0$ is arbitrary,
  \begin{equation}
    N_{\theta^{-ns}}/n \to s, \quad \text{as $n\to\infty$,} \quad
    \text{almost surely}.
\end{equation}
\end{prop}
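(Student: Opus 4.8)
The plan is to reduce the whole statement to a strong law of large numbers for $\log R_n$, where $R_n=|K_n|$, and then to invert. By \eqref{eq:Sigma} the scattering rate is $\Sigma(K_j)=cR_j$ with $c=4\pi M^2m/(m+M)^2$, so
\[
  T_n=\frac{1}{c}\sum_{j=0}^{n-1}\frac{\lambda_j}{R_j},
\]
and since $N_t$ is the generalized inverse of $n\mapsto T_n$, the asymptotics of $N_t$ are governed by those of $(R_n)$ together with the i.i.d.\ exponential variables $(\lambda_j)$. To handle $(R_n)$, write $K_{j+1}=|K_j|(aY_j+(1-a)Z_j)$, where $(Z_j)$ are the i.i.d.\ uniform-on-$S^2$ variables driving the Markov chain $(K_n)$; then $R_{j+1}=R_j\,|aY_j+(1-a)Z_j|$, so $\log R_n=\log R_0+\sum_{j=0}^{n-1}\log|aY_j+(1-a)Z_j|$. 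The key point is the rotation-invariance identity $\frac{1}{4\pi}\int_{S^2}\log|ay+(1-a)z|\,H(dz)=\log\theta$, valid for \emph{every} $y\in S^2$ (the second relation displayed at \eqref{eq:constU}); since $Z_j$ is independent of $\Sigma_j$ and $Y_j$ is $\Sigma_j$-measurable, it gives $\E[\log|aY_j+(1-a)Z_j|\mid\Sigma_j]=\log\theta$. Hence $M_n:=\log R_n-\log R_0-n\log\theta$ is a $(\Sigma_n)$-martingale, and — remarkably — one never needs the stationary law of the angular chain $(Y_n)$. Its increments are bounded by $|\log|2a-1||$ when $m\neq M$ and in general have uniformly bounded conditional variances (the only singularity of $z\mapsto\log|ay+(1-a)z|$ is an integrable logarithmic one on $S^2$), so the martingale strong law yields $n^{-1}\log R_n\to\log\theta$ almost surely; equivalently $R_n=R_0\,\theta^{n(1+o(1))}\to0$.

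Next, pass from $R_n$ to $T_n$. Since $-\log\theta>0$, the summands $\lambda_j/R_j$ grow like $\theta^{-j}$ up to subexponential factors, so the sum is dominated by its last term. Concretely, fix $\delta>0$: by the previous step, a.s.\ for all large $j$ one has $\theta^{-j(1-\delta)}\le R_j^{-1}\le\theta^{-j(1+\delta)}$, and by Borel--Cantelli (the probabilities $\Pr(\lambda_j\notin[e^{-\delta j},e^{\delta j}])$ are summable) one has $e^{-\delta j}\le\lambda_j\le e^{\delta j}$ eventually. Combining these, the largest summand equals $\theta^{-n(1+o(1))}$ while the whole sum of $n$ terms is at most $n\,\theta^{-n(1+o(1))}$, and therefore $n^{-1}\log T_n\to-\log\theta$ almost surely.

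Finally, invert and obtain uniformity. Because $T_n\uparrow\infty$ and $T_{N_t}\le t\le T_{N_t+1}$, the relation $j^{-1}\log T_j\to-\log\theta$ forces, for each fixed $s\ge0$ and every $\delta>0$, that $N_{\theta^{-ns}}/n$ lies in $[\,s/(1+\delta)-o(1),\ s/(1-\delta)+o(1)\,]$, whence $N_{\theta^{-ns}}/n\to s$ almost surely. Intersecting the exceptional null sets over a countable dense set of values of $s$, and observing that $s\mapsto N_{\theta^{-ns}}/n$ is non-decreasing while the limit $s\mapsto s$ is continuous, a Dini/P\'olya-type argument upgrades this pointwise convergence to uniform convergence on $[0,s_0]$. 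The main obstacle is the first step: pinning the martingale compensator down to the exact value $\log\theta$ — which is precisely where the explicit spherical integral \eqref{eq:constU} is needed — and handling the logarithmic singularity of the increments in the degenerate case $m=M$ (via an $L^2$ martingale law or a mild truncation); once $n^{-1}\log R_n\to\log\theta$ is in hand, the Borel--Cantelli bookkeeping and the monotone inversion are routine. As a byproduct, the first and third steps together give $|K_t|=R_{N_t}=|K_0|\,t^{-1+o(1)}$, which is \eqref{eq:Kasymp}.
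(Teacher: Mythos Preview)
Your proof is correct, but it takes a somewhat different route from the paper. The paper observes that the increments $D_j=\log R_j-\log R_{j-1}$ are not merely a martingale difference sequence with constant compensator, but are actually \emph{i.i.d.}: by rotation invariance the full law of $|ay+(1-a)Z|$ with $Z$ uniform on $S^2$ is independent of $y\in S^2$, not just its mean. The paper then applies Cram\'er's theorem to obtain large-deviation tail bounds of the form $\Pr\{N_{\theta^{-s}}\gtrless xs\}\le e^{-sJ(x)+o(s)}$ for $x\neq 1$, and deduces uniform almost-sure convergence by summability and Borel--Cantelli on a finite grid. Your approach instead establishes $n^{-1}\log R_n\to\log\theta$ directly via a (martingale) SLLN, then shows $n^{-1}\log T_n\to-\log\theta$ by a sandwich argument on $\lambda_j/R_j$, and finally inverts monotonically with a P\'olya-type upgrade to uniformity. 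This is more elementary---it avoids large deviations entirely---while the paper's route yields quantitative exponential tail estimates as a byproduct. A small simplification of your first step: since the $D_j$ are in fact i.i.d.\ with finite mean $\log\theta$ (and, as you note, finite variance even when $a=\tfrac12$), Kolmogorov's SLLN applies directly and the martingale machinery is not needed.
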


The proofs of the propositions are deferred to Sections~\ref{sec:pflln}--\ref{sec:pfclt}.
The left-hand side of \eqref{eq:clt} is an additive functional of an exponentially
mixing Markov process, and there are many approaches to proving such a functional
central limit theorem; we use the martingale method.
Given these two propositions, Theorem~\ref{thm:convergence} is proved as follows.

\begin{proof}[Proof of Theorem~\ref{thm:convergence}]
  Fix $s_0>0$.
  Since the topology of uniform convergence on $C([0,s_0],\R^3)$ coindices with the Skorohod topology restricted to 
  continuous functions, and since $s \mapsto X_{\theta^{-ns}}$ is continuous almost surely, which is evident from \eqref{eq:KtXt},
  it suffices to show that $\frac{1}{n}X_{\theta^{-ns}} \to \sigma B_s$ as processes in $D([0,s_0],\R^3)$.
  Let
  \begin{equation}
    Z_n(s) = \frac{1}{\sqrt{n}} \frac{(m+M)^2}{4\pi m^2M^2} \sum_{j=0}^{[ns]} \lambda_jY_j,
    \quad
    \Phi_n(s) = \frac{N_{\theta^{-ns}}-1}{n}
    .
  \end{equation}
  We claim that $Z_n \circ \Phi_n \to \sigma B$ in distribution in $D([0,s_0],\R^3)$.
  This claim can be shown following \cite[Section~17]{MR1700749}:
  Let $D_0$ be the subspace of $D([0,s_0],[0,\infty))$ of nondecreasing functions
  (endowed with the relative topology).
  Let $\Phi(s) = s$.
  Then $\Phi_n,\Phi \in D_0$,
  and Proposition~\ref{prop:lln} with the fact that the Skorohod topology is weaker than
  the uniform topology implies $\Phi_n \to \Phi$ almost surely as elements of $D_0$.
  By \cite[Theorem~4.4]{MR1700749}, with $\Phi_n \to \Phi$ a.s.\ in $D_0$ and
  Proposition~\ref{prop:clt}, it follows that, in distribution in $D\times D_0$,
  \begin{equation}
    (Z_n,\Phi_n) \to (\sigma B, \Phi).
  \end{equation}
  Since $B$ and $\Phi$ are continuous, this implies that $Z_n \circ \Phi_n \to \sigma B$;
  see \cite[Section~17]{MR1700749}.

  It remains to argue that the last term in \eqref{eq:Xtsum} is negligible.
  By \cite[Theorem~4.1]{MR1700749}, it suffices to show that
  \begin{equation} \label{e:lastterm}
    \frac{1}{\sqrt{n}} \sup_{s\in[0,s_0]} |(s-T_{N_{\theta^{-ns}}})K_{N_{\theta^{-ns}}}|
    \leq  \frac{1}{\sqrt{n}} \max_{j \leq N_{\theta^{-ns_0}}} \lambda_j \to 0
    \quad \text{(in probability)}.
  \end{equation}
  Since
  $\E(\max\{\lambda_1, \dots, \lambda_k\}) = \sum_{i=1}^k \frac{1}{i} = O(\log k)$
  and by Proposition~\ref{prop:lln},
  \begin{align}
    \Pr\{\max_{j \leq N_{\theta^{-ns_0}}} \lambda_j \geq \varepsilon \sqrt{n}\}
    &\leq
    \Pr\{\max_{j \leq n^2} \lambda_j \geq \varepsilon \sqrt{n}\}
    + \Pr\{N_{\theta^{-ns_0}} \geq n^2\} = o(1),
  \end{align}
  showing \eqref{e:lastterm}, as claimed.
\end{proof}

It remains to prove Propositions~\ref{prop:clt}--\ref{prop:lln}, and
to prove Theorem~\ref{thm:position}, a task addressed in the rest of the paper.

\subsection{Proof of Proposition~\ref{prop:lln}}
\label{sec:pflln}

Before giving the proof of Proposition~\ref{prop:lln},
we briefly sketch its main idea.
It will be shown that $\frac{1}{n} \log \Sigma(R_n)$ concentrates at $\log \theta$, as $n\to\infty$,
so that $\Sigma(R_n) \approx \theta^n$. Since $\theta^{-n}$ grows exponentially, as $n\to\infty$,
the last term in
\begin{equation}
    \sum_{j=0}^n \lambda_j\Sigma(R_j)^{-1}
\end{equation}
is dominant. In particular, a lower bound on this sum can be obtained by dropping all terms except
for the last one. On the other hand, using that $\Sigma(R_j)$ is decreasing, an upper bound
is obtained by replacing $\Sigma(R_j)^{-1}$ by $\Sigma(R_n)^{-1}$ for $j = 0, \dots, n-1$.
Thus, on the exponential scale, the above sum can be approximated well by $\theta^{-n}$.
(Very crude estimates suffice here, due to the exponential scaling.)
The claim can be deduced from such considerations, recalling that, by definition,
\begin{equation} \label{eq:Nthetat}
  N_{\theta^{-s}} = \inf \left\{ n \geq 0: \sum_{j=0}^n \lambda_j\Sigma(R_j)^{-1} \geq \theta^{-s}\right\}
  .
\end{equation}

We now enrich the above sketch to a proof.
Since $p(k,\cdot)$ is the uniform probability measure on $ak+(1-a)|k|S^2$,
it follows that $K_{j+1} = aK_{j}+(1-a)|K_{j}|U_j$,
where $U_j$ is a uniform random variable on $S^2$. Thus $R_{j+1} = R_{j}|aY_j+(1-a)U_j|$.
In distribution, $|aY_j+(1-a)U_j|$ is equal to $|ay+(1-a)U_j|$, with any $y \in S^2$,
and therefore $D_j = \log R_{j} - \log R_{j-1}$ are i.i.d.\ random variables with
distribution $\gamma_*u$ where, for an arbitrarily fixed $y \in S^2$,
\begin{equation} \label{eq: definition gamma}
  \gamma(z) 
  = \log |ay+(1-a)z|, \quad (z \in S^{2}),
\end{equation}
$u$ is the uniform probability measure on $S^{2}$, and $\gamma_*u$ is its push-forward by $\gamma$.
Let $\Lambda$ be the logarithmic moment generating function of $D_j$, i.e.,
\begin{equation}  \label{eq: definition lgf}
  \Lambda(\xi)
  = \log \E(e^{\xi D_j})
  = \int_{S^2} |ay+(1-a)z|^\xi \; u(dz).
\end{equation}
As is evident from Figure~\ref{fig:surfaces},
$|ay+(1-a)z|$ is bounded above and below for $a \neq \half$,
and thus then $\Lambda(\xi) < \infty$, for all $\xi \in \R$.
For $a=\half$, using polar coordinates, it can be seen that
$\Lambda(\xi)<\infty$, for $\xi>-1$.
In both cases $\Lambda$ is convex on $\mathbb{R}$,
strictly convex in a neighborhood of $0$, and differentiable
in a neighborhood of $0$.
By general properties of generating functions, we deduce that the Legendre transform of $\Lambda$,
\begin{equation}
  I(x) = \sup_{\xi \in \R} (x\xi - \Lambda(\xi)),
\end{equation}
satisfies  $I(\log \theta)=\min_{x \in \mathbb{R}} I(x)=0$ with $\log \theta= E(D_j)= \Lambda'(0)  <\infty$ and 
 $I$ is finite, continuous and strictly convex in a neighborhood of $\log\theta$. In particular,  $I(x)$ is positive (or $+\infty$) if $x \neq \log \theta$.
By Cram\'er's theorem \cite{MR1619036}, the sum of i.i.d.\ random variables 
\begin{equation}
 \frac{1}{n}\sum_{j=1}^n D_j   =     \frac{1}{n} \left(   \log R_{n} - \log R_{0} \right)
\end{equation}
satisfies the large deviation estimates
\begin{alignat}{2}
  \limsup_{n\to\infty} \frac{1}{n}
  \log \Pr\{\log(R_n)-\log(R_0) \geq xn\} &\leq
  - I(x)
  &\quad& (x \geq \log \theta), \label{eq:log-R-ldp1}\\
  \limsup_{n\to\infty} \frac{1}{n}
  \log \Pr\{\log(R_n)-\log(R_0) \leq xn\} &\leq
  -I(x)
  && (x \leq \log \theta). \label{eq:log-R-ldp2}
\end{alignat}

\begin{proof}[Proof of Proposition~\ref{prop:lln}]
  It suffices to show that, for $x \neq 1$, there exists some $J(x) > 0$ such that, as $s \to \infty$,
  \begin{alignat}{2}
    \Pr\{N_{\theta^{-s}} \geq xs\} &\leq e^{-sJ(x)+o(s)} &\quad& (x > 1), \label{eq:N-ldp1}\\
    \Pr\{N_{\theta^{-s}} \leq xs\} &\leq e^{-sJ(x)+o(s)} && (x < 1). \label{eq:N-ldp2}
  \end{alignat}
  As we show further below, this implies the claim.
  We first observe that $\log \Sigma(R_n) = \log R_n - \log R_0 + \log \Sigma(R_0)$,
  by \eqref{eq:Sigma}.
  Since $\log \Sigma(R_0)= \Sigma(r)$ is a constant, \eqref{eq:log-R-ldp1}--\eqref{eq:log-R-ldp2}
  and continuity of $I(x)$ near $\log \theta$ imply that, as $n\to\infty$,
  \begin{alignat}{2}
    \label{eq:Itilde1}
    \Pr\{ \log \Sigma(R_n) \geq xn\}
    &\leq e^{-nI(x) + o(n)} &\quad& (x > \log \theta),\\
    \label{eq:Itilde2}
    \Pr\{ \log \Sigma(R_n) \leq xn\}
    &\leq e^{-nI(x) + o(n)} && (x < \log \theta).
  \end{alignat}
  Also, by \eqref{eq:Nthetat},
  \begin{align} \label{eq:Nthetar1}
    \{ N_{\theta^{-s}} \geq r \}
    &\subseteq \Big\{ \sum_{j=0}^{[r]-1} \lambda_j \Sigma(R_j)^{-1} \leq \theta^{-s} \Big\}
    ,
    \\ \label{eq:Nthetar2}
    \{ N_{\theta^{-s}} \leq r \}
    &\subseteq \Big\{ \sum_{j=0}^{[r]+1} \lambda_j \Sigma(R_j)^{-1} \geq \theta^{-s} \Big\}
    .
  \end{align}

  To show \eqref{eq:N-ldp1}, let $x > 1$ and choose $x'$ such that $x > x' > 1$.
  By \eqref{eq:Nthetar1}, and using $\Sigma(R_j) \geq 0$, $\lambda_j \geq 0$, and the union bound,
  \begin{align}
    \Pr\{N_{\theta^{-s}} \geq xs\}
    &\leq \Pr\Big\{\sum_{j=0}^{[xs]-1} \frac{\lambda_j}{\Sigma(R_j)} \leq \theta^{-s}\Big\}
    \leq \Pr\Big\{ \frac{\lambda_{[xs]-1}}{\Sigma(R_{[xs]-1})} \leq \theta^{-s} \Big\}
    \nonumber \\
    &\leq \Pr\Big\{ \lambda_{[xs]-1} \leq \theta^{(x'-1)s} \Big\}
      + \Pr\Big\{ \Sigma(R_{[xs]-1}) \geq \theta^{x's} \Big\}
      .
  \end{align}
  Since $\lambda_{[xs]-1}$ is an exponential random variable and $\theta^{x'-1}<1$,
  the first term goes to $0$ exponentially as $s\to\infty$.
  The second term is estimated by
  \begin{equation} \label{eq:It1}
    \Pr\{ \Sigma(R_{[xs]-1}) \geq \theta^{x's} \}
    \leq \Pr\{ \log(\Sigma(R_{[xs]-1})) \geq x's \log\theta \}
    \leq e^{-xs I(\frac{x'}{x} \log \theta) + o(s)}
  \end{equation}
  as $s\to\infty$. From these inequalities, \eqref{eq:N-ldp1} follows, with some $J(x) > 0$.

  To verify \eqref{eq:N-ldp2}, let $x<1$ and choose $x'>0$ such that $x<x'<1$.
  Then, by \eqref{eq:Nthetar2}, since $R_j$ is decreasing,
  and using the union bound,
  \begin{align}
    \Pr\{N_{\theta^{-s}} \leq xs\}
    &\leq \Pr\Big\{\sum_{j=0}^{[xs]+1} \frac{\lambda_j}{\Sigma(R_j)} \geq \theta^{-s}\Big\}
    \leq \Pr\Big\{ \sum_{j=0}^{[xs]+1} \lambda_j \geq \theta^{-s} \Sigma(R_{[xs]+1})\Big\}\nonumber\\
    &\leq \Pr\Big\{\sum_{j=0}^{[xs]+1} \lambda_j \geq \theta^{-(1-x')s}\Big\}
       + \Pr\Big\{ \Sigma(R_{[xs]+1}) \leq \theta^{x's} \Big\} \label{eq:Nldp2-bd1}
       .
  \end{align}
  Since $\E(\sum_{j=0}^{[xs]+1}\lambda_j) = O(xs)$, by Markov's inequality,
  the first term on the right-hand side of \eqref{eq:Nldp2-bd1}
  tends to $0$ exponentially.
  As in \eqref{eq:It1}, the second term on the right-hand side of \eqref{eq:Nldp2-bd1}
  converges to $0$ exponentially since, by \eqref{eq:Itilde2},
  \begin{align}
    \Pr\{\log(\Sigma(R_{[xs]+1})) \leq x's \log\theta \}
    \leq e^{-xs I(\frac{x'}{x} \log \theta) + o(s)}
  \end{align}
  as $s\to\infty$. Thus we conclude \eqref{eq:N-ldp2} for some $J(x) > 0$.
  
  It remains to show that \eqref{eq:N-ldp1}--\eqref{eq:N-ldp2} indeed imply compact convergence of $\Phi_n(s) = N_{\theta^{-ns}}/n \to s$ almost surely.
  For this, we proceed as in the proof of \cite[Proposition~8.2]{MR2664334}.
  Fix $s_0>0$, $\varepsilon >0$.
  Setting $m>2s_0/\varepsilon$, 
  $s_{k} = ks_0/m$,
  since $\Phi_n(s)$ is nondecreasing in $s$,
\begin{align}
  \Pr\Big\{\sup_{s\in[0,s_0]}(\Phi_n(s)-s) \geq \varepsilon \Big\}
  &\leq \sum_{k=1}^{m} \Pr\Big\{\sup_{s\in [s_{k-1},s_{k}]} (\Phi_n(s)-s) \geq \varepsilon\Big\}
  \nonumber\\
  &\leq \sum_{k=1}^{m} \Pr\Big\{\Phi_n(s_k)-s_k \geq \varepsilon - \frac{s_0}{m} \geq \half \varepsilon\Big\}
  \nonumber\\
  &\leq \sum_{k=1}^{m} \Pr\Big\{N_{\theta^{-ns_{k}}} \geq n s_{k}(1 + \tfrac12 \varepsilon m/s_0) \Big\}
\end{align}
which is bounded by $c^{-1} e^{-c n}\to 0$ as $n\to\infty$, for some $c=c_{\varepsilon,s_0}>0$.
Analogously,
\begin{equation}
  \Pr\Big\{\sup_{s\in[0,s_0]} (s-\Phi_n(s)) \geq \varepsilon\Big\} \leq c^{-1} e^{-c n}
 .
\end{equation}
Since the right-hand sides are summable,
the Borel--Cantelli Lemma immediately implies that $\sup_{s\in[0,t_0]} |\Phi_n(s) -s| \to 1$ with probability 1,
as claimed.
\end{proof}

\subsection{Proof of Proposition~\ref{prop:clt}}
\label{sec:pfclt}

The proof of Proposition~\ref{prop:clt} follows
the well-known route to prove convergence to Brownian motion by martingale approximation; see, e.g., the introduction of \cite{MR834478}.
We recall that $(Y_n)$ is a Markov chain on $S^2$ and that $(\lambda_n)$ is a sequence of i.i.d.\ unit exponential random variables, independent of $(Y_n)$.
We find it convenient to consider the pair $(Y_n,\lambda_n)$ as a Markov chain on $S^2 \times (0,\infty)$, and denote its transition operator by $P$ and
its kernel by $p(y,l; dy',dl'), \; y,y'\in S^2, l,l' > 0$.
When the exponential random variables are irrelevant, we also write transition kernel of $(Y_n)$
as $p(y; dy')$.
Then, with $b$ as in \eqref{eq:constU},
\begin{equation} \label{eq:bprop}
  \int l'y' \; p(y, l; dy', dl')
  =\int y' \; p(y; dy')
  = by.
\end{equation}
Define $h, V: S^2 \times \R_+ \to \R^3$ by
\begin{equation} 
  h(y,l) = ly + \frac{b}{1-b}y, \quad
  V(y,l) = ly, \quad (y \in S^{2}, l > 0)
\end{equation}
so that $V = (1-P)h$, by \eqref{eq:constU}. Define
\begin{equation} \label{eq:Yprimedef}
  Y_n' = h(Y_n,\lambda_n) - Ph(Y_{n-1},\lambda_{n-1})
  .
\end{equation}
Then $\E(Y_{n+1}' | \Sigma_n) = 0$, i.e., $(Y_{n}',\Sigma_n)$ is a martingale difference sequence.

\begin{lem} \label{lem:cov}
  \begin{equation} \label{eq:covconv}
    \frac{1}{n} \sum_{j=1}^n \E(Y_j'^\alpha Y_j'^\beta | \Sigma_{j-1}) \to \frac{2\delta_{\alpha\beta}}{3(1-b)}
    \quad (\text{in probability}).
  \end{equation}
\end{lem}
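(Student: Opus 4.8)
The plan is to compute the conditional second moment $\E(Y_j'^\alpha Y_j'^\beta\mid\Sigma_{j-1})$ in closed form as a function of $Y_{j-1}$ alone, thereby turning the left-hand side of \eqref{eq:covconv} into a Birkhoff average of the Markov chain $(Y_n)$, and then to invoke a law of large numbers for $(Y_n)$, whose invariant measure turns out to be uniform on $S^2$.

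For the first step, recall that $\lambda_{j-1}$ is a unit exponential independent of $(Y_n)$, so $\int l'\,p(y,l;dy',dl')=1$, and $\int y'\,p(y;dy')=by$ by \eqref{eq:constU}; hence $Ph(y,l)=\frac{b}{1-b}y$, which is \emph{independent of} $l$ (consistent with $V=(1-P)h$ and $V(y,l)=ly$). Since $Y_j'=h(Y_j,\lambda_j)-Ph(Y_{j-1},\lambda_{j-1})$ and the conditional law of $(Y_j,\lambda_j)$ given $\Sigma_{j-1}$ is $p(Y_{j-1};\cdot)$ tensored with the unit exponential, expanding the product and using $\E(h(Y_j,\lambda_j)\mid\Sigma_{j-1})=Ph(Y_{j-1},\lambda_{j-1})$ gives
\[
  \E(Y_j'^\alpha Y_j'^\beta\mid\Sigma_{j-1})=P(h^\alpha h^\beta)(Y_{j-1})-\frac{b^2}{(1-b)^2}\,Y_{j-1}^\alpha Y_{j-1}^\beta=:G^{\alpha\beta}(Y_{j-1}).
\]
Because $h(y,l)=(l+\frac{b}{1-b})y$ and the unit exponential has first and second moments $1$ and $2$, one finds $P(h^\alpha h^\beta)(y)=\bigl(2+\frac{2b}{1-b}+\frac{b^2}{(1-b)^2}\bigr)\int (y')^\alpha(y')^\beta\,p(y;dy')$, so $G^{\alpha\beta}$ is an explicit bounded function on $S^2$, and the left side of \eqref{eq:covconv} equals $\frac1n\sum_{j=0}^{n-1}G^{\alpha\beta}(Y_j)$.

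Next I would identify the invariant measure and the ergodic behaviour of $(Y_n)$. Since $K_{j+1}=aK_j+(1-a)|K_j|U_j$ with $U_j$ uniform on $S^2$, the transition kernel $p(y;\cdot)$ is rotation‑equivariant, so the uniform probability measure $u$ on $S^2$ is invariant; moreover $(Y_n)$ is (uniformly/geometrically) ergodic, a Doeblin minorisation holding after a bounded number of steps — in the regime $a\ge\frac12$, where one step moves only within a spherical cap around $Y_j$, the $N$‑step kernel nevertheless has a density bounded below for $N$ a fixed finite number. This is precisely the exponential mixing property alluded to before Proposition~\ref{prop:clt}, and I expect establishing it rigorously (especially in the cap regime) to be the main obstacle; for the present lemma, however, only convergence in probability is needed, so it is enough to have $\E(G^{\alpha\beta}(Y_j))\to\int G^{\alpha\beta}\,du$ together with summable decay of correlations of $(G^{\alpha\beta}(Y_j))_j$, which forces $\mathrm{Var}\bigl(\frac1n\sum_j G^{\alpha\beta}(Y_j)\bigr)\to 0$.

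Finally I would compute the limit. Since $u$ is invariant, $\int\!\!\int (y')^\alpha(y')^\beta\,p(y;dy')\,u(dy)=\int (y')^\alpha(y')^\beta\,u(dy')$, and the elementary identity $\int_{S^2}y^\alpha y^\beta\,u(dy)=\frac13\delta_{\alpha\beta}$ holds (because $\sum_\alpha (y^\alpha)^2=1$ and $u$ is rotation invariant); the two $\frac{b^2}{(1-b)^2}$‑terms in $\int G^{\alpha\beta}\,du$ then cancel, giving
\[
  \int_{S^2}G^{\alpha\beta}\,du=\Bigl(2+\frac{2b}{1-b}\Bigr)\frac{\delta_{\alpha\beta}}{3}=\frac{2}{1-b}\cdot\frac{\delta_{\alpha\beta}}{3}=\frac{2\delta_{\alpha\beta}}{3(1-b)},
\]
which is the claimed limit.
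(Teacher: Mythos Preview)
Your proposal is correct and follows essentially the same route as the paper: both compute the conditional second moment as a bounded function of $Y_{j-1}$ (your $G^{\alpha\beta}$ is exactly the paper's expression \eqref{eq:qbd}, since $2+\tfrac{2b}{1-b}+\tfrac{b^2}{(1-b)^2}=\tfrac{1+(1-b)^2}{(1-b)^2}$), identify the uniform measure on $S^2$ as the stationary law, invoke a Doeblin minorisation for $(Y_n)$ to obtain an ergodic theorem, and then integrate against $u$ using $\int y^\alpha y^\beta\,u(dy)=\tfrac13\delta_{\alpha\beta}$. The paper phrases the LLN step via total-variation coupling to a stationary copy rather than your variance argument, but this is a cosmetic difference.
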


\begin{proof}
  Let $q_{\alpha\beta}(y) = \E(Y_j'^\alpha Y_j'^\beta | Y_{j-1}=y) = \E(Y_j'^\alpha Y_j'^\beta | Y_{j-1}=y, \lambda_{j-1} = l)$
  with arbitrary $l$.
  In a short calculation, we first show that
  \begin{equation}
    C_{\alpha\beta} := \E(q_{\alpha\beta}(U)) = \frac{2\delta_{\alpha\beta}}{3(1-b)},
  \end{equation}
  where $U$ is a uniform random variable on $S^2$. Indeed, by \eqref{eq:bprop}--\eqref{eq:Yprimedef},
  \begin{equation}
    Y_j'
    = (\lambda_j + \frac{b}{1-b}) Y_j - \frac{b}{1-b} Y_{j-1}
    .
  \end{equation}
  Since $\lambda_j$ is a unit exponential random variable,
  \begin{equation}
    \E(\lambda_j + \frac{b}{1-b}) = 1+\frac{b}{1-b} = \frac{1}{1-b}
    ,\quad
    \E((\lambda_j + \frac{b}{1-b})^2)
    = \frac{1+(1-b)^2}{(1-b)^2}
    .
  \end{equation}
  Since $\E(Y_j^\alpha|\Sigma_{j-1})Y_{j-1}^\beta  =  b Y_{j-1}^\alpha Y_{j-1}^\beta$ by \eqref{eq:bprop},
  it follows that
  \begin{equation} \label{eq:qbd}
    \E(Y_j'^\alpha Y_j'^\beta | \Sigma_{j-1})
    = \frac{1}{(1-b)^2} ( (1+(1-b)^2) \E(Y_j^\alpha Y_j^\beta | \Sigma_{j-1}) - b^2 Y_{j-1}^\alpha Y_{j-1}^\beta)
    ,
  \end{equation}
  and therefore, since $\E(U^\alpha U^\beta) = \frac{1}{3}\delta_{\alpha\beta}$,
  \begin{equation}
    C_{\alpha\beta}
    = \frac{1}{(1-b)^2} ( (1+(1-b)^2-b^2) \E(U_j^\alpha U_j^\beta)
    = \frac{2\delta_{\alpha\beta}}{3(1-b)} 
    ,
  \end{equation}
  as claimed.
  Let $p^n(y,dz)$ be the $n$-step transition probability of the Markov chain $(Y_n)$,
  defined from the one-step transition probability $p(y,dz)$ by the Chapman-Kolmogorov equations.
  From the definition of $p(y,dz)$ above \eqref{eq:bprop}, it can be seen (cf.\ Figure~\ref{fig:surfaces}) that
  there exists $n > 0$ such that 
  the density of the absolutely continuous component,
  with respect to the uniform measure on $S^2$, of $p^n(y, \cdot)$
   is bounded below uniformly by a strictly positive constant.
  Thus the Markov chain $Y_j$  satisfies Doeblin's condition \cite[p.197, condition (D')]{MR1038526}.
  It follows that, if $\bar Y_j$ is an independent instance of this Markov chain
  with initial
  stationary distribution $u$, the total variation distance between the
  distributions of $Y_j$ and $\bar Y_j$ tends to $0$, exponentially fast.
  Since the sequence $\bar Y_j$ is stationary and ergodic,
  $\frac{1}{n}\sum_{j=0}^{n-1} q(\bar Y_j) \to q(U) = C$ almost surely, by the ergodic theorem,
  and the proof follows.
\end{proof}

\begin{proof}[Proof of Proposition~\ref{prop:clt}]
  Given Lemma~\ref{lem:cov},
  the proof is a straightforward consequence of a standard functional central
  limit theorem for martingales \cite[Theorem~3.3]{MR668684}.
  For the reader's convenience, we restate a special case of it
  as Theorem~\ref{thm:martingaleclt}.
  Let
  \begin{equation} \label{eq:Znprime}
    Z_n' := \sum_{j=0}^n Y_j' = \sum_{j=1}^n V(Y_j,\lambda_j) - Ph(Y_0,\lambda_0) + Ph(Y_n,\lambda_n).
  \end{equation}
  Since
  \begin{equation}
    |Y_j'|
    = \left|(\lambda_j + \frac{b}{1-b}) Y_j - \frac{b}{1-b} Y_{j-1}\right| \leq \lambda_j + \frac{2b}{1-b},
  \end{equation}
  it follows that
  \begin{equation}
    \E(Y_j'^\alpha Y_j'^\beta 1_{|Y_j'| \geq \sqrt{n} \varepsilon}) \leq \Pr\{\lambda_j \geq \sqrt{n}\varepsilon - O(1)\} \to 0.
  \end{equation}
  Together with the fact that $(Y_j',\Sigma_j)$ is a martingale difference sequence, all conditions of
  Theorem~\ref{thm:martingaleclt} are satisfied for $M_{j} = Y_j'$.
  Setting  $\tilde \sigma = \frac{2}{3}(1-b)^{-1}$, it follows that
  $(\frac{1}{\sqrt{n}}Z_{[ns]}') \to (\tilde\sigma B_s)$ in the Skorohod space $D([0,s_0],\R^3)$,
  for any $s_0>0$. Since
  \begin{equation}
    \sum_{j=0}^n \lambda_jY_j = \sum_{j=1}^n V(Y_j,\lambda_j),
  \end{equation}
  the proof is concluded by the observation that the two ``boundary terms'' in \eqref{eq:Znprime} are negligible in the limit
  after multiplication by $\frac{1}{\sqrt{n}}$.
  This is trivial since
  \begin{equation}
    |Ph(Y_0,\lambda_0) + Ph(Y_n,\lambda_n)|
    =
    \left|\frac{b}{1-b}Y_0 - \frac{b}{1-b}Y_n\right|
    \leq \frac{2b}{1-b},
  \end{equation}
  so the boundary terms are in fact bounded.
\end{proof}

\subsection{Proof of Theorem~\ref{thm:position}}
\label{sec:pfpos}

\begin{proof}[Proof of \eqref{eq:Kasymp}]
By Proposition~\ref{prop:lln}, $N_{\theta^{-s}}/s \to 1$ as $s \to \infty$ a.s.,
i.e., $N_t= (\log t/(-\log \theta))(1+o(1))$.
Moreover, by \eqref{eq:log-R-ldp1}--\eqref{eq:log-R-ldp2},
$\log |K_n|-\log |K_0| = n \log \theta(1+o(1))$ almost surely. It follows that
\begin{equation}
  \log |K_t|-\log |K_0| =
  \frac{-\log t}{N_t\log \theta} (\log |K_{N_t}|-\log |K_0|)(1+o(1)) = -\log t(1+o(1))
\end{equation}
almost surely, as claimed.
\end{proof}

\begin{proof}[Proof of \eqref{eq:XEinf}]
We start from \eqref{eq:Xtsum}.  Since $(t-T_{N_t}) |K_{N_t}| \leq \lambda_{N_t}$
is finite almost surely, and using again the bounds on the random time change
(Proposition~\ref{prop:lln}), it suffices to prove
\begin{equation}
  \sup_n \Big| \E\Big( \sum_{j=1}^n \lambda_j Y_j \Big) \Big|   <\infty, \qquad \text{and} \qquad  \E \Big|\sum_{j=1}^n \lambda_j Y_j \Big|    \to \infty.
\end{equation}
The first claim follows from the fact that the distribution of  $\lambda_j Y_j$ converges
to that of $\lambda U$ exponentially fast, in total variation distance
(see proof of Lemma~\ref{lem:cov}),
and $\E(\lambda U)=\E(U)=0$. 
To verify the second claim, we apply  Markov's inequality,
\begin{equation}
 \E \Big|\sum_{j=1}^n \lambda_j Y_j \Big|   \geq  \sqrt{n}\,  \Pr\Big\{ \frac{1}{\sqrt{n}} \Big|\sum_{j=1}^n \lambda_j Y_j \Big| \geq  1 \Big\}
\end{equation}
and can then invoke Proposition \ref{prop:clt}  to argue that  $\Pr\{\cdot\}$
on the right hand side is bounded away from $0$.
\end{proof}

\section*{Acknowledgement}

Most of the research that has led to this paper was carried out while
the three authors were at the Institut f\"ur Theoretische Physik, ETH
Z\"urich.
The work of R.B. has also been supported by the National Science
Foundation under grant No.\ DMS-1128155.  This paper was completed
during a stay of R.B. and J.F. at the Institute for Advanced Study in
Princeton. The stay of J.F. at IAS has been supported by `The Fund for
Math' and `The Robert and Luisa Fernholz Visiting Professorship Fund'.

\appendix

\section{Martingale functional central limit theorem}
\label{app:martingaleclt}

The following theorem is a special case of \cite[Theorem~3.3]{MR668684}.

\begin{thm} \label{thm:martingaleclt}
  Let $(M_n, \Sigma_n)$ be an $\R^d$-valued martingale difference sequence, i.e., $(\Sigma_n)$
  is a filtration of $\sigma$-algebras and 
  $E(M^\alpha_{n+1} | \Sigma_n) = 0$ for $1\leq \alpha \leq d$.
  Let
  \begin{equation}
    Z_n(s) = \frac{1}{\sqrt{n}}
    \sum_{j=1}^{[ns]} M_j .
  \end{equation}
  Assume there is $\sigma^2 >0$ such that for all $t > 0$,
  $1\leq \alpha, \beta \leq d$, $\varepsilon > 0$,
  \begin{gather}
    \frac{1}{ns} \sum_{j=1}^{[ns]} \E(M_j^\alpha M_j^\beta | \Sigma_{j-1}) \to \sigma^2 \delta_{\alpha\beta},
    \\
    \frac{1}{ns} \sum_{j=1}^{[ns]} \E(M_j^\alpha M_j^\beta 1_{|M_j| \geq \sqrt{n}\varepsilon} | \Sigma_{j-1}) \to 0
    ,
    \quad (n\to\infty),
  \end{gather}
  in probability  Then $(Z_n(s))_s \to (\sigma B(s))_s$
  in distribution in $D([0,s_0],\R^d)$ for all $s_0 > 0$,
  where $D$ is equipped with the Skorohod topology.
\end{thm}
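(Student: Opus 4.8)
The plan is to prove Theorem~\ref{thm:martingaleclt} by the standard two-step scheme for weak convergence in Skorohod space: first establish convergence of the finite-dimensional distributions of $(Z_n(s))_s$ to those of $(\sigma B_s)_s$, and then establish tightness of the laws of $(Z_n(\cdot))$ in $D([0,s_0],\R^d)$. Since the limit $(\sigma B_s)_s$ has almost surely continuous paths, these two facts together yield convergence in distribution in $D([0,s_0],\R^d)$ with the Skorohod topology. Throughout one uses that the first hypothesis, summed over $\alpha=\beta$, gives that the nondecreasing process $s \mapsto A_n(s) := \frac1n \sum_{j=1}^{[ns]} \E(|M_j|^2 \mid \Sigma_{j-1})$ converges in probability, for each $s$, to the continuous deterministic function $d\sigma^2 s$; since $A_n$ is nondecreasing and the limit is continuous, a Dini/P\'olya-type argument upgrades this to convergence in probability locally uniformly in $s$.

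For the finite-dimensional distributions I would reduce to the scalar case by the Cram\'er--Wold device: for a fixed $v \in \R^d$, the variables $v\cdot M_j$ form a real martingale difference sequence, and the hypotheses give $\frac{1}{n}\sum_{j=1}^{[ns]}\E((v\cdot M_j)^2 \mid \Sigma_{j-1}) \to \sigma^2 |v|^2 s$ in probability (by expanding $(v\cdot M_j)^2 = \sum_{\alpha\beta} v_\alpha v_\beta M_j^\alpha M_j^\beta$) together with the conditional Lindeberg condition for $v\cdot M_j$ (using $|v\cdot M_j| \le |v|\,|M_j|$). Then, fixing $0 = u_0 < u_1 < \cdots < u_p \le s_0$, the increments $v\cdot(Z_n(u_\ell) - Z_n(u_{\ell-1}))$ are normalized sums of martingale differences over the blocks $[nu_{\ell-1}] < j \le [nu_\ell]$, and a scalar martingale central limit theorem applies: using the characteristic-function (exponential-martingale) method one peels blocks off from the right, at each step replacing a block's contribution by its Gaussian analogue because the sum of conditional variances over that block converges in probability to the deterministic constant $\sigma^2|v|^2(u_\ell - u_{\ell-1})$ while the conditional Lindeberg condition kills the Taylor remainder. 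This shows the increments converge jointly to independent $N(0,\sigma^2(u_\ell - u_{\ell-1}))$ variables, which, ranging over $v$, gives the finite-dimensional distributions of $(\sigma B_s)_s$.

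For tightness I would verify Aldous's criterion for the $\R^d$-valued processes $Z_n$: for every sequence of $(\Sigma_{[n\cdot]})$-stopping times $\tau_n \le s_0$ and every $\delta_n \downarrow 0$, one needs $|Z_n(\tau_n + \delta_n) - Z_n(\tau_n)| \to 0$ in probability. By optional stopping and orthogonality of martingale increments, $\E\big(|Z_n(\tau_n+\delta_n)-Z_n(\tau_n)|^2 \mid \Sigma_{[n\tau_n]}\big) = A_n(\tau_n+\delta_n) - A_n(\tau_n)$, which by the locally uniform convergence of $A_n$ to $d\sigma^2 s$ is $O(\delta_n)$ plus a term tending to $0$ in probability; a localization on the high-probability event where this bound holds then gives the claim. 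Moreover the conditional Lindeberg hypothesis forces $\frac{1}{\sqrt n}\max_{j\le[ns_0]}|M_j| \to 0$ in probability — bound $\Pr\{\max_j |M_j| \ge \varepsilon\sqrt n\} \le \frac{1}{\varepsilon^2 n}\sum_j \E(|M_j|^2 1_{|M_j|\ge\varepsilon\sqrt n})$ and localize again on the event where the conditional truncated-variance sum is small — so every subsequential limit has continuous paths. Combining this with Aldous's criterion yields tightness in $D([0,s_0],\R^d)$, and together with the finite-dimensional convergence this completes the proof.

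The main obstacle I expect is the tightness step, and specifically carrying it out for a triangular array of martingale differences rather than a single fixed martingale: the control of the conditional quadratic variation must be uniform over all short subintervals simultaneously, not merely over a deterministic grid, which is exactly why Aldous's stopping-time formulation rather than a naive modulus-of-continuity estimate is the convenient tool, and why the monotonicity-plus-continuity upgrade of $A_n$ to locally uniform convergence is used. A secondary technical point is the passage from the conditional Lindeberg hypothesis (convergence in probability, not in $L^1$) to the negligibility of the largest increment, which requires the localization argument indicated above. None of this involves a genuinely new idea, consistent with the theorem being quoted from the literature.
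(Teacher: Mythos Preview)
The paper does not give a proof of this theorem at all: it is stated in the appendix with the remark that it is a special case of \cite[Theorem~3.3]{MR668684} (Helland's martingale CLT), and is then simply quoted. So there is nothing to compare your argument against on the paper's side.

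Your outline is the standard route to such a result and is sound in its overall structure: Cram\'er--Wold plus the scalar martingale CLT for finite-dimensional distributions, and Aldous's criterion for tightness, with the key monotonicity/Dini upgrade of the conditional-variance process $A_n$ to locally uniform convergence. One small slip: the identity you write for the tightness step,
\[
  \E\big(|Z_n(\tau_n+\delta_n)-Z_n(\tau_n)|^2 \mid \Sigma_{[n\tau_n]}\big) \;=\; A_n(\tau_n+\delta_n) - A_n(\tau_n),
\]
is not quite correct as stated, since $A_n$ involves conditional variances given $\Sigma_{j-1}$ at each intermediate $j$, not given $\Sigma_{[n\tau_n]}$. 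What optional stopping actually gives is
\[
  \E\big(|Z_n(\tau_n+\delta_n)-Z_n(\tau_n)|^2 \mid \Sigma_{[n\tau_n]}\big) \;=\; \E\big(A_n(\tau_n+\delta_n) - A_n(\tau_n)\,\big|\,\Sigma_{[n\tau_n]}\big),
\]
and it is the unconditional expectation of the right-hand side that you then control via the locally uniform convergence of $A_n$ plus a localization/truncation to pass from convergence in probability to convergence in expectation. You already flag the localization step, so this is only a notational correction; the argument goes through.
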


\section{Proof of Proposition~\ref{prop:markovboltzmann}}
\label{sec:pfmarkovboltzmann}

\begin{proof}
  The strong Markov property for $(X_t, K_t)$ implies that, for any stopping time $S$,
  \begin{equation}
    \Gamma_t f(x,k)
    = \E^{x,k}(f(X_t, K_t))
    = \E^{x,k}(\Gamma_{t-S} f(X_S,K_S)).
  \end{equation}
  We start the semigroup at the time of the first jump, or rather at $S = \min \{t, T_1\}$:
  \begin{equation}
    \Gamma_t f(x,k)
    = \E^{x,k}(f(X_t, K_t);\; T_1 > t)  + \E^{x,k}((\Gamma_{t-{T_1}}f)(X_{T_1}, K_{T_1}) ;\; T_1 \leq t)
  \end{equation}
  In the first term, $X_t = x+\frac{k}{m} t$ and $K_t=k$, and hence
  \begin{equation}
    \E^{x,k}(f(X_t,K_t);\;T_1>t)
    = f(x+\frac{k}{m}t, k) \Pr^{x,k}(T_1 > t)
    = f(x+\frac{k}{m}t, k) e^{-\Sigma(k)t}.
  \end{equation}
  In the second term, since $T_1$ is exponentially distributed with parameter $\Sigma(k)$,
  \begin{equation}
    \E^{x,k}(\Gamma_{t-S}f(X_S,K_S) ;\; T_1 \leq t)
    = \int_0^t \left( \int p(k, dk') (\Gamma_{t-\tau}f)(x+\frac{k}{m}\tau, k') \right) e^{-\Sigma(k)\tau} \Sigma(k) \; d{\tau}.
  \end{equation}
  Using $q(k,dk') = \Sigma(k) p(k,dk')$, and substituting $r=t-\tau$, this is
  \begin{equation}
    \E^{x,k}(\Gamma_{t-S}f(X_S, K_S) ;\; T_1 \leq t)
    = \int_0^t \left( \int q(k, dk') (\Gamma_{r}f)(x+\frac{k}{m}(t-r), k') \right) e^{-\Sigma(k)(t-r)} \; dr.
  \end{equation}
  Putting both terms together, we obtain
  \begin{multline}
    \Gamma_tf(x,k) 
    = f(x+\frac{k}{m}t,k) e^{-\Sigma(k)t} 
    \\+ \int_0^t \left( \int q(k, dk') (\Gamma_{r}f)(x+\frac{k}{m}(t-r)),k') \right) e^{-\Sigma(k)(t-r)} \; dr .
  \end{multline}
  The last expression is just the Duhamel formula for the semigroup generated by $L = A - \Sigma + K$,
  where $A=\frac{k}{m} \nabla$ is the advection operator, and
  considering the gain operator $K$ with kernel $q$ as a perturbation to the semigroup generated $A-\Sigma$.
  Hence, the claim is verified.
\end{proof}


\begin{thebibliography}{10}

\bibitem{MR2664334}
G.~Basile and A.~Bovier.
\newblock Convergence of a kinetic equation to a fractional diffusion equation.
\newblock {\em Markov Process. Related Fields}, 16(1):15--44, 2010.

\bibitem{Bau09}
Roland Bauerschmidt.
\newblock Quantum friction and \v{C}erenkov radiation.
\newblock {\em Master's Thesis, ETH Z\"urich. Unpublished}, 2009.

\bibitem{MR1700749}
Patrick Billingsley.
\newblock {\em Convergence of probability measures}.
\newblock Wiley Series in Probability and Statistics: Probability and
  Statistics. John Wiley \& Sons Inc., New York, second edition, 1999.
\newblock A Wiley-Interscience Publication.

\bibitem{MR1924366}
Laurent Bruneau and Stephan De~Bi{\`e}vre.
\newblock A {H}amiltonian model for linear friction in a homogeneous medium.
\newblock {\em Comm. Math. Phys.}, 229(3):511--542, 2002.

\bibitem{MR2212220}
S.~Caprino, C.~Marchioro, and M.~Pulvirenti.
\newblock Approach to equilibrium in a microscopic model of friction.
\newblock {\em Comm. Math. Phys.}, 264(1):167--189, 2006.

\bibitem{MR1295030}
Robert Dautray and Jacques-Louis Lions.
\newblock {\em Mathematical analysis and numerical methods for science and
  technology. {V}ol. 6}.
\newblock Springer-Verlag, Berlin, 1993.
\newblock Evolution problems. II, With the collaboration of Claude Bardos,
  Michel Cessenat, Alain Kavenoky, Patrick Lascaux, Bertrand Mercier, Olivier
  Pironneau, Bruno Scheurer and R{\'e}mi Sentis, Translated from the French by
  Alan Craig.

\bibitem{MR1619036}
Amir Dembo and Ofer Zeitouni.
\newblock {\em Large deviations techniques and applications}, volume~38 of {\em
  Applications of Mathematics (New York)}.
\newblock Springer-Verlag, New York, second edition, 1998.

\bibitem{MR1038526}
J.~L. Doob.
\newblock {\em Stochastic processes}.
\newblock Wiley Classics Library. John Wiley \& Sons Inc., New York, 1990.
\newblock Reprint of the 1953 original, A Wiley-Interscience Publication.

\bibitem{MR1901513}
L{\'a}szl{\'o} Erd{\H{o}}s.
\newblock Linear {B}oltzmann equation as the long time dynamics of an electron
  weakly coupled to a phonon field.
\newblock {\em J. Statist. Phys.}, 107(5-6):1043--1127, 2002.

\bibitem{MR2858064}
J{\"u}rg Fr{\"o}hlich, Zhou Gang, and Avy Soffer.
\newblock Some {H}amiltonian models of friction.
\newblock {\em J. Math. Phys.}, 52(8):083508, 13, 2011.

\bibitem{MR668684}
Inge~S. Helland.
\newblock Central limit theorems for martingales with discrete or continuous
  time.
\newblock {\em Scand. J. Statist.}, 9(2):79--94, 1982.

\bibitem{MR0436782}
John~David Jackson.
\newblock {\em Classical electrodynamics}.
\newblock John Wiley \& Sons Inc., New York, second edition, 1975.

\bibitem{MR2588245}
Milton Jara, Tomasz Komorowski, and Stefano Olla.
\newblock Limit theorems for additive functionals of a {M}arkov chain.
\newblock {\em Ann. Appl. Probab.}, 19(6):2270--2300, 2009.

\bibitem{MR834478}
C.~Kipnis and S.~R.~S. Varadhan.
\newblock Central limit theorem for additive functionals of reversible {M}arkov
  processes and applications to simple exclusions.
\newblock {\em Comm. Math. Phys.}, 104(1):1--19, 1986.

\bibitem{MR2763032}
Antoine Mellet, St{\'e}phane Mischler, and Cl{\'e}ment Mouhot.
\newblock Fractional diffusion limit for collisional kinetic equations.
\newblock {\em Arch. Ration. Mech. Anal.}, 199(2):493--525, 2011.

\end{thebibliography}

\end{document}